\documentclass[american,aps,pra,reprint, superscriptaddress]{revtex4-1}
\usepackage{amsmath,amscd,amsthm,amssymb}
\usepackage{mathrsfs}
\usepackage{graphicx}
\usepackage{epsf,epstopdf}
\usepackage[center]{caption2}
\usepackage[all]{xy}

\usepackage[unicode=true,pdfusetitle, bookmarks=true,bookmarksnumbered=false,bookmarksopen=false, breaklinks=false,pdfborder={0 0 0},backref=false,colorlinks=false] {hyperref}
\hypersetup{ colorlinks,linkcolor=myurlcolor,citecolor=myurlcolor,urlcolor=myurlcolor}
\usepackage{graphics,epstopdf,graphicx, amsthm, amsmath, amssymb, times, braket, color, bm}
\usepackage[up]{subfigure}
\usepackage{cleveref}

\definecolor{myurlcolor}{rgb}{0,0,0.7}

\theoremstyle{plain}
\newtheorem{thm}{\protect\theoremname}
\newtheorem{prop}[thm]{Proposition}

\providecommand{\theoremname}{Theorem}
\newcommand*{\myproofname}{Proof}

\makeatother

\theoremstyle{definition}

\theoremstyle{remark}
\newtheorem{rem}{Remark}[thm]
\newtheorem*{note}{Note}

\begin{document}

 \author{Chunhe Xiong}
 \email{xiongchunhe@zju.edu.cn}
 \affiliation{School of Mathematical Sciences, Zhejiang University, Hangzhou 310027, PR~China}

\author{Asutosh Kumar}
 \email{Corresponding author: asutoshk.phys@gmail.com}
 \affiliation{P.G. Department of Physics, Gaya College, Magadh University, Rampur, Gaya 823001, India}

 \author{Junde Wu}
 \email{Corresponding author: wjd@zju.edu.cn}
 \affiliation{School of Mathematical Sciences, Zhejiang University, Hangzhou 310027, PR~China}

\title{Family of coherence measures and duality between quantum coherence and path distinguishability}
\begin{abstract}

Coherence measures and their operational interpretations lay the cornerstone of coherence theory. In this paper, we introduce a class of coherence measures with $\alpha$-affinity, say $\alpha$-affinity of coherence for $\alpha \in (0, 1)$. Furthermore, we obtain the analytic formulae for these coherence measures and study their corresponding convex roof extension. We provide an operational interpretation for $1/2$-affinity of coherence by showing that it is equal to the error probability to discrimination a set of pure states with the least square measurement. Employing this relationship
we regain the optimal measurement for equiprobable quantum state discrimination. Moreover, we compare these coherence quantifiers, and establish a complementarity relation between $1/2$-affinity of coherence and path distinguishability for some special cases.

\end{abstract}
\maketitle

\section{Introduction}
\label{sec:intro1}

Quantum coherence  \cite{Baumgratz2014} is one of the fundamental features in quantum mechanics, characterizing the wave-like property for all objects. It is also a necessary condition for entanglement and other quantum correlations which manifests its core position in quantum information theory \cite{Adesso2016BC,Yao2015}. As a key quantum resource, coherence may lead to an operational advantage over classical physics, and its important role in quantum algorithms has been investigated \cite{Matera2016A,Hillery2016A,Anand2017,Shi2017}. Hence, for a given quantum state, it is important to ask the amount of coherence it has and if the quantifier of coherence has any operational meaning?
In Ref. \cite{Baumgratz2014}, authors have established a resource theory of coherence which is a rigorous framework to quantify coherence. In this theory, coherence characterizes
superposition of a quantum state relative to a fixed
orthogonal basis and thereafter a lot of work has been done to enrich this theory \cite{Girolami14,Streltsov2015B,Chitambar2016A,chitambar2016B,streltsov2017A}.
This framework places certain important constraints on the measures of coherence, and different coherence measures may
reflect different physical aspects of the quantum system \cite{Yuan2015,winter2016,Napoli2016,Bu2017A}. Like other resource theories, the resource theory of coherence is composed of ``free states'' and ``free operations''.

Let $\mathcal{H}$ be a finite dimensional Hilbert space with an orthogonal basis $\{\ket{i}\}^d_{i=1}$. Density matrices that are diagonal in this basis are free states and we call them incoherent states as they do not possess any coherence. We label this set of incoherent quantum states by $\mathcal{I}$. That is,
\begin{align}
\mathcal{I}=\{\sigma ~|~ \sigma=\sum^d_{i=1}\lambda_i\ket{i}\bra{i}\}.
\end{align}

Free operations in coherence theory are the completely positive and trace preserving (CPTP) maps which admit an incoherent Kraus representation. That is, there always exists a set of Kraus operators $\{K_i\}$ such that
\begin{align}
\frac{K_i\sigma K^{\dagger}_i}{\mathrm{Tr}K_i\sigma K^{\dagger}_i}\in\mathcal{I},
\end{align}
for each $i$ and any incoherent state $\sigma$. These operations are also called incoherent operations and we label these operations by $\Phi$.

Analogous to the quantification of entanglement \cite{HorodeckiRMP09,Vedral1997,Vedral1998,Plenio2000}, any measure of coherence $C$ should satisfy the following axioms \cite{Baumgratz2014}:

(C1) Faithfulness. $C(\rho)\ge0$ with equality if and only if $\rho$ is incoherent.

(C2) Monotonicity. $C$ does not increase under the action of an incoherent operation, i.e., $C(\Phi(\rho))\le C(\rho)$ for any incoherent operation $\Phi$.

(C3) Strong monotonicity. $C$ does not increase on average under selective incoherent operations, i.e., $\sum_ip_iC(\sigma_i)\le C(\rho)$ with probabilities $p_i=\mathrm{Tr}K_i\rho K^{\dagger}_i$, post-measurement states $\sigma_i=p^{-1}_iK_i\rho K^{\dagger}_i$, and incoherent operators $K_i$.

(C4) Convexity. Nonincreasing under mixing of quantum states, i.e.,$\sum_ip_iC(\rho_i)\ge C(\sum_ip_i\rho_i)$ for any set of states $\{\rho_i\}$ and $p_i\ge0$ with $\sum_ip_i=1$.

Conditions (C1) and (C2) highlight the role of free states and free operations in the coherence theory, i.e., the free states have zero coherence and the free operations cannot increase coherence of any state. (C3) and (C4) are two constraints imposed on coherence measures. Like in entanglement theory, a coherence quantifier which satisfies nonnegativity and (strong) monotonicity is called (strong) coherence monotone. Furthermore, if it also satisfies convexity, we call it convex (strong) coherence monotone.

  It is worth noting that authors in Ref. \cite{X.yu2016B} have provided a simple and interesting condition to replace (C3) and (C4) with the additivity of coherence for block-diagonal states,
 \begin{align}\label{eq1}
 C(p\rho\oplus(1-p)\sigma)=pC(\rho)+(1-p)C(\sigma),
 \end{align}
 for any $p\in[0,1],~\rho\in\mathcal{E}(\mathcal{H}_1),~\sigma\in\mathcal{E}(\mathcal{H}_2)$ and $p\rho\oplus(1-p)\sigma\in\mathcal{E}(\mathcal{H}_1\oplus\mathcal{H}_2)$, where $\mathcal{E}(\mathcal{H})$ denotes the set of density matrices on $\mathcal{H}$.

 They proved that conditions (C1), (C2) and \eqref{eq1} are equivalent to conditions (C1) through (C4). This is surprising because (\ref{eq1}) is operation-independent equality, whereas strong monotonicity and convexity are operation-dependent inequalities. In general, it is relatively easy to check whether a coherence quantifier satisfies (\ref{eq1}) than (C3) and (C4).

In this paper we introduce a class of coherence measures, and attempt to answer the question posed in the beginning, by linking this coherence measure to ambiguous quantum state discrimination (QSD).
QSD, as a fundamental problem in quantum mechanics, has been studied extensively \cite{Helstrom1976,Holevo2011,Holevo2001,HELSTROM1967A,HELSTROM1968,HOLEVO1973337,Yuen1975,Davies1978}. It is not only an important problem of theoretical research, but also plays a key role in quantum communication and quantum cryptography \cite{Phoenix1995,HKLo2011,Bouwmeester2000,Gisin2002RMP,Loepp2006}.

 We briefly review the ambiguous QSD. Suppose there are two persons, Alice and Bob. Alice chooses a state $\rho_i$ from a set of states $\{\rho_i\}^N_{i=1}$ with probability $\eta_i$ and sends it to Bob. Now Bob's job is to determine which state he has received, as accurately as possible. To do this, Bob performs a positive-operator valued measure (POVM) on each $\rho_i$ and declares that the state is $\rho_j$ when the measurement outcome reads $j$. The POVM is a set of positive operators $\{M_i\}$ satisfying $\sum_i M_i=I$. As the probability to get the result $j$ with state $\rho_i$ is $p_{j|i}=\mathrm{Tr}(M_j\rho_i)$, the corresponding maximal success probability is
\begin{align}
P^{opt}_S(\{\rho_i,\eta_i\})=\mathop{\mathrm{max}}_{\{M_i\}}\sum_i\eta_i\mathrm{Tr}(M_i\rho_i),
\end{align}
where the maximization is done over all POVMs. For $N=2$ case, the analytic formula of $P^{opt}_S$ and the optimal measurement are known. However, no solution about optimal probability and measurement is known for general $N>2$ case.

As a suboptimal choice, least square measurement (LSM)  is an alternative to discriminate quantum states \cite{Belavkin1975a,Belavkin1975,Holevo1978,Hausladen1994,Hausladen1996,peres1991,Eldar2001}. In comparison to the optimal measurement, the LSM has several nice properties. First, its construction is relatively simple as it can be determined directly from the given ensemble. Second, it is very close to the optimal measurement when the states to be distinguished are {\it almost orthogonal} \cite{Holevo1978,Spehner2014}. The construction of LSM is as follows.

Given an ensemble $\{\rho_i,\eta_i\}^N_{i=1}$ and denoting $\rho_{out}=\sum_i \eta_i\rho_i$, the least square measurements are \cite{note}
\begin{align}\label{eq2}
M^{lsm}_i=\eta_i\rho_{out}^{-1/2}\rho_i\rho_{out}^{-1/2},i=1,2,...,N.
\end{align}

 As a result, the minimal error probability of this measurement is
\begin{align}\label{eq4}
P^{lsm}_E(\{\rho_i,\eta_i\})=1-\sum_i\eta_i\mathrm{Tr}(M^{lsm}_i\rho_i).
\end{align}

The paper is structured as follows. In Sec. \ref{sec:coherence-affinity2} we introduce $\alpha$-affinity of coherence. We reveal the connection between the $1/2$-affinity of coherence and QSD with least square measurement in Sec. \ref{sec:coherence-lsm3}. Furthermore, we deal with quantum state discrimination with coherence theory in Sec. \ref{sec:qsd-lsm4} and Sec. \ref{sec:qsd-lsm5}. Besides, we establish a duality between $1/2$-affinity of coherence and path distinguahability in Sec. \ref{sec:duality6}, and finally conclude in Sec. \ref{sec:conclusion7} with a summary and outlook.

\section{Quantifying coherence with affinity}
\label{sec:coherence-affinity2}

\subsection{$\alpha$-affinity and $\alpha$-affinity of distance}
Distances in state space are good candidates for quantifying quantum correlations. In this subsection, we introduce a distance using which we can establish a bona fide measure to quantify coherence.
In classical statistical theory \cite{LeCam1986}, affinity is defined as
\begin{align*}
A(f,g)=\sum_x\sqrt{f(x)g(x)},
\end{align*}
where $f$ and $g$ are discrete probability distributions. 
This definition is alike the Bhattacharyya coefficient \cite{Bhattacharyya-measure} between two probability distributions (discrete or continuous) in classical probability theory. 
Classical affinity quantifies the closeness of two probability distributions.
Borrowing the notion from classical statistical theory, Luo and Zhang \cite{Luo2004} have introduced quantum affinity as follows.
Let $\mathcal{H}$ be a $d$-dimensional Hilbert space and $\mathcal{E}(\mathcal{H})$ be the set of density matrix on $\mathcal{H}$. For any $\rho,\sigma\in\mathcal{E}(\mathcal{H})$, quantum affinity is defined as
 \begin{align}\label{eq5}
 A(\rho,\sigma):=\mathrm{Tr}(\sqrt{\rho}\sqrt{\sigma}).
 \end{align}

Quantum affinity, similar to fidelity \cite{Nielsen10}, describes how close two quantum states are. We drop the adjective ``quantum'' in the rest of this paper unless there is any ambiguity.

 The notion of affinity has been extended to $\alpha$-affinity $(0< \alpha <1)$, and is defined as
 \begin{align*}
 A^{(\alpha)}(\rho,\sigma):=\mathrm{Tr}\rho^{\alpha}\sigma^{1-\alpha}.
 \end{align*}

For each $\alpha\in(0,1)$, $A^{(\alpha)}(\rho,\sigma)$ satisfies the following properties:
(1) Boundedness. $A^{(\alpha)}(\rho,\sigma)\in[0,1]$ with $A^{(\alpha)}(\rho,\sigma)=1$ if and only if $\rho=\sigma$.
(2) Monotonicity. $A^{(\alpha)}(\rho,\sigma)\le A^{(\alpha)}(\Phi(\rho),\Phi(\sigma))$ for any CPTP map $\Phi$.
(3) Joint concavity. If $\rho_i,\sigma_i\in\mathcal{E}(\mathcal{H})$ and $p_i\ge0,\sum_i p_i=1$, then $A^{(\alpha)}(\sum_ip_i\rho_i,\sum_ip_i\sigma_i)\ge\sum_ip_iA^{(\alpha)}(\rho_i,\sigma_i)$.
The proof of property (1) is given in Appendix \ref{app3}. See Ref. \cite{Audenaert2015} for the property (2), and property (3) is the result of Lieb's concavity theorem \cite{LIEB1973}.

It is well-known that $\alpha$-affinity plays an important role in quantum hypothesis testing. For the two state discrimination with many identical copies, one has \cite{Audenaert2007,Audenaert2007A}
\begin{align}
-\lim_{N\rightarrow\infty}\frac{1}{N}P^{opt}_{E,N}(\{\rho^{\otimes N}_i,\eta_i\}^2_{i=1})=-\inf_{\alpha\in(0,1)}\{\ln(\mathrm{Tr}\rho^{\alpha}_1\rho^{1-\alpha}_2)\}.\nonumber
\end{align}
This limit defines a function of $\alpha$-affinity, and
\begin{align}
Q(\rho,\sigma):=\min_{\alpha\in(0,1)}A^{(\alpha)}(\rho,\sigma),
\end{align}
is nonlogarithmic version of quantum Chernoff bound (QCB) \cite{Audenaert2007A}.



%


 Moreover, we can see that $\alpha$-affinity is related to $\alpha$-$z$-relative R$\acute{\mathrm{e}}$nyi entropy \cite{Audenaert2015}
 \begin{align}
 S_{\alpha,z}(\rho||\sigma)=\frac{1}{\alpha-1}\ln F_{\alpha,z}(\rho||\sigma), \nonumber
 \end{align}
where
\begin{align}\label{eq10}
F_{\alpha,z}(\rho||\sigma):=\mathrm{Tr}(\sigma^{\frac{1-\alpha}{2z}}\rho^{\frac{\alpha}{z}}\sigma^{\frac{1-\alpha}{2z}})^z,
\end{align}
and
\begin{align}
A^{(\alpha)}(\rho,\sigma)=F_{\alpha,1}(\rho,\sigma).
\end{align}

Note that the family of $\alpha$-$z$-relative R$\acute{\mathrm{e}}$nyi entropies includes relative entropy $S$ and max-relative entropy $S_{max}$ \cite{Audenaert2015}
\begin{align}
S=\lim_{\alpha\rightarrow1}S_{\alpha,\alpha}, ~~S_{max}=\lim_{\alpha\rightarrow\infty}S_{\alpha,\alpha}.\nonumber
\end{align}

It's worth noting that several coherence measures like relative entropy \cite{Baumgratz2014}, geometric coherence \cite{Streltsov2015B} and max-relative entropy \cite{Bu2017A} are related to $\alpha$-$z$-relative R$\acute{\mathrm{e}}$nyi entropy. In the next subsection, we introduce yet another measure of coherence, namely $\alpha$-affinity of coherence which is related to $\alpha$-$z$-relative R$\acute{\mathrm{e}}$nyi entropy.

Based on $\alpha$-affinity, we introduce $\alpha$-affinity of distance as
 \begin{align}\label{eq7}
 d^{(\alpha)}_a(\rho,\sigma):=1- [A^{(\alpha)}(\rho,\sigma)]^{1/\alpha},
 \end{align}
where $\rho,\sigma\in\mathcal{E}(\mathcal{H})$. Obviously, $\alpha$-affinity of distance satisfies the following properties.

(P1) $d^{(\alpha)}_a(\rho,\sigma)\ge 0$ with equality if and only if $\rho=\sigma$.

(P2) $d^{(\alpha)}_a$ is contractive under CPTP maps.

\subsection{Quantifying coherence}

Quantification of entanglement from the geometric point of view began in \cite{Vedral1997,Vedral1998}. Authors in these two papers put forward the scheme to quantify entanglement with the minimal distance between a given quantum state and all the separable states with relative entropy and Bures distance. Later, Luo and Zhang \cite{Luo2004} studied the quantification of entanglement using Hellinger distance. Bures distance and Hellinger distance have been proven to be good choices to quantify quantum discord \cite{spehner2013A,spehner2013B,Roga2016}.

For any $\alpha\in(0,1)$, we define $\alpha$-affinity of coherence as the minimal $\alpha$-affinity of distance over all incoherent states,
\begin{align}
C^{(\alpha)}_a(\rho):=&\min_{\sigma\in\mathcal{I}} d^{(\alpha)}_a(\rho,\sigma) \nonumber \\
=&1-\max_{\sigma\in\mathcal{I}}(\mathrm{Tr}(\rho^{\alpha}\sigma^{1-\alpha}))^{1/\alpha}.
\end{align}

An advantage of $C^{(\alpha)}_a$ over geometric coherence, $C_g(\rho):=1-\max_{\sigma\in\mathcal{I}}(\mathrm{Tr}(\sqrt{\sqrt{\sigma}\rho\sqrt{\sigma}}))^2$ \cite{Streltsov2015B}, is that it is relatively easy to compute.
Let $\sigma=\sum_i\mu_i\ket{i}\bra{i}$ be an incoherent state. Then
\begin{align}\label{eq2}
A^{(\alpha)}(\rho)&\equiv \max_{\sigma\in\mathcal{I}}\mathrm{Tr}(\rho^{\alpha}\sigma^{1-\alpha})\nonumber\\
&=\max_{\mu_i} \left(\sum_i\mu_i^{1-\alpha}\bra{i}\rho^{\alpha}\ket{i}\right)\nonumber\\
&\le \max_{\mu_i} \left(\sum_i\mu_i\right)^{1-\alpha} \left(\sum_i\bra{i}\rho^{\alpha}\ket{i}^{1/\alpha}\right)^{\alpha}\nonumber\\
&=\left(\sum_i\bra{i}\rho^{\alpha}\ket{i}^{1/\alpha}\right)^{\alpha},
\end{align}
where the inequality follows from the H$\ddot{\mathrm{o}}$lder's inequality: $\sum_{i=1}^n |x_iy_i| \le \left(\sum_{i=1}^n |x_i|^p\right)^{1/p} \left(\sum_{i=1}^n |y_i|^q\right)^{1/q}$ for $p,q>1$ with $\frac{1}{p} + \frac{1}{q} = 1$. Here $p=\frac{1}{1-\alpha} >1$ and $q=\frac{1}{\alpha} >1$. Inequality \eqref{eq2} gives an upper bound on $A^{(\alpha)}(\rho)$. This suggests that we can choose suitable
$\{\mu_i\}$'s such that above inequality becomes an equality.
As a result, we obtain the analytic expression for $C^{(\alpha)}_a$ as,
\begin{align}\label{eq11}
C^{(\alpha)}_a(\rho)=1-\sum_i\bra{i}\rho^{\alpha}\ket{i}^{1/\alpha},
\end{align}
and the closest incoherent state which minimizes $C^{(\alpha)}_a(\rho)$ is
\begin{align}\label{eq9}
\sigma_{\rho}=\sum_i\frac{\bra{i}\rho^{\alpha}\ket{i}^{1/\alpha}}{\sum_j\bra{j}\rho^{\alpha}\ket{j}^{1/\alpha}}\ket{i}\bra{i}.
\end{align}

With (P1), (P2) and \eqref{eq1}, we have the following theorem.

 \begin{thm}\label{thm4}
 $\alpha$-affinity of coherence is a coherence measure.
 \end{thm}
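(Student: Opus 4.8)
The plan is to bypass the operation-dependent inequalities (C3) and (C4) and instead appeal to the equivalence established in Ref.~\cite{X.yu2016B}: any nonnegative functional that satisfies faithfulness (C1), monotonicity (C2) and the block-diagonal additivity \eqref{eq1} automatically obeys all of (C1)--(C4). Hence it suffices to check these three properties for $C^{(\alpha)}_a$, and the first two will follow almost directly from the distance properties (P1) and (P2), leaving \eqref{eq1} as the genuinely computational step.

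First I would dispatch faithfulness. Since $C^{(\alpha)}_a(\rho)=\min_{\sigma\in\mathcal{I}}d^{(\alpha)}_a(\rho,\sigma)$ is a minimum of nonnegative quantities, (P1) gives $C^{(\alpha)}_a(\rho)\ge 0$, with equality iff some incoherent $\sigma$ realizes $d^{(\alpha)}_a(\rho,\sigma)=0$, i.e.\ iff $\rho=\sigma\in\mathcal{I}$. For monotonicity I would use the standard geometric argument: let $\sigma_\rho\in\mathcal{I}$ be the optimal incoherent state of \eqref{eq9}. Every incoherent operation $\Phi$ maps $\mathcal{I}$ into itself, because each Kraus branch $K_i$ sends incoherent states to incoherent states and a convex mixture of such states is again diagonal; thus $\Phi(\sigma_\rho)$ is a legitimate competitor in the minimization defining $C^{(\alpha)}_a(\Phi(\rho))$, and contractivity (P2) yields
\[
C^{(\alpha)}_a(\Phi(\rho))\le d^{(\alpha)}_a\big(\Phi(\rho),\Phi(\sigma_\rho)\big)\le d^{(\alpha)}_a(\rho,\sigma_\rho)=C^{(\alpha)}_a(\rho).
\]

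The remaining and, I expect, most delicate step is the additivity \eqref{eq1}, for which the closed form \eqref{eq11} is the right tool. For $\tau=p\rho\oplus(1-p)\sigma$ the block structure is preserved under the functional calculus, so $\tau^{\alpha}=p^{\alpha}\rho^{\alpha}\oplus(1-p)^{\alpha}\sigma^{\alpha}$; this is the crux on which everything hinges, and it follows from the fact that the eigenvectors of a block-diagonal operator lie within the individual blocks. Consequently each diagonal entry $\bra{i}\tau^{\alpha}\ket{i}$ equals $p^{\alpha}\bra{i}\rho^{\alpha}\ket{i}$ on the first block and $(1-p)^{\alpha}\bra{i}\sigma^{\alpha}\ket{i}$ on the second, and raising to the power $1/\alpha$ extracts the prefactors $p$ and $1-p$ linearly. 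Summing and inserting into \eqref{eq11}, together with the identity $\sum_i(\bra{i}\rho^{\alpha}\ket{i})^{1/\alpha}=1-C^{(\alpha)}_a(\rho)$ and its analogue for $\sigma$, gives $C^{(\alpha)}_a(\tau)=pC^{(\alpha)}_a(\rho)+(1-p)C^{(\alpha)}_a(\sigma)$, which is precisely \eqref{eq1}. With (C1), (C2) and \eqref{eq1} verified, the criterion of Ref.~\cite{X.yu2016B} finishes the argument.
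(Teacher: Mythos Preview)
Your proposal is correct and follows essentially the same route as the paper: (C1) and (C2) are deduced from the distance properties (P1)--(P2), and then the block-diagonal additivity \eqref{eq1} is verified via the closed form \eqref{eq11} together with $(p\rho\oplus(1-p)\sigma)^{\alpha}=p^{\alpha}\rho^{\alpha}\oplus(1-p)^{\alpha}\sigma^{\alpha}$. Your argument for (C2) is slightly more explicit than the paper's in spelling out that $\Phi(\mathcal{I})\subseteq\mathcal{I}$ and using $\Phi(\sigma_\rho)$ as a competitor, but the content is identical.
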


 \begin{proof}
First, it is obvious that $C^{(\alpha)}_a(\rho)\ge0$. Since $ d^{(\alpha)}_a(\rho,\sigma)=0$ iff $\rho=\sigma$, one has $C^{(\alpha)}_a(\rho)=0$ if and only if $\rho\in\mathcal{I}$. In addition, since $d^{(\alpha)}_a(\rho,\sigma)$ obeys monotonicity under CPTP maps, we have $C^{(\alpha)}_a(\rho)\ge C^{(\alpha)}_a(\Phi(\rho))$ for any incoherent operation $\Phi$.
Now, instead of (C3) and (C4), we prove that $C^{(\alpha)}_a$ satisfies additivity of coherence for block-diagonal states. We have
\begin{align}
&C^{(\alpha)}_a(p\rho\oplus(1-p)\sigma)\nonumber\\
=&1-\sum_i\bra{i}(p\rho\oplus(1-p)\sigma)^{\alpha}\ket{i}^{1/\alpha}\nonumber\\
=&1-\sum_i\bra{i}(p\rho)^{\alpha} \oplus [(1-p)\sigma]^{\alpha}\ket{i}^{1/\alpha}\nonumber\\
=&p(1-\sum_i\bra{i}\rho^{\alpha}\ket{i}^{1/\alpha})+(1-p)(1-\sum_i\bra{i}\sigma^{\alpha}\ket{i}^{1/\alpha})\nonumber\\
=&pC^{(\alpha)}_a(\rho)+(1-p)C^{(\alpha)}_a(\sigma),\nonumber
\end{align}
Thus, $C^{(\alpha)}_a$ is a coherence measure for each $\alpha\in(0,1)$.
\end{proof}


Similarly, we define quantum Chernoff bound of coherence, $C_{qcb}(\rho)$, and affinity of coherence, $\widetilde{C}_a(\rho)$, respectively as
\begin{align}
C_{qcb}(\rho)&:=\min_{\sigma\in\mathcal{I}}(1-Q^{1/\alpha}(\rho,\sigma))\nonumber\\
&=1-\max_{\sigma\in\mathcal{I}}\min_{\alpha\in(0,1)}(\mathrm{Tr}(\rho^{\alpha}\sigma^{1-\alpha}))^{1/\alpha}\nonumber\\
&=\max_{\alpha\in(0,1)}C^{(\alpha)}_a(\rho),
\label{eq:C-qcb}
\end{align}
and
\begin{align}
\widetilde{C}_a(\rho)&:=\min_{\sigma\in\mathcal{I}}(1-A(\rho,\sigma))=1-\max_{\sigma\in\mathcal{I}}\mathrm{Tr}(\sqrt{\rho}\sqrt{\sigma})\nonumber\\
&=1-\sqrt{\sum_i\bra{i}\sqrt{\rho}\ket{i}^2},
\end{align}
and the closest incoherent state is again $\sigma_{\rho}$ in Eq. \eqref{eq9}.

Note that Eq. \eqref{eq:C-qcb} does not necessarily imply that $C_{qcb}$ is a coherence measure for some $\alpha\in(0,1)$ because $C^{\alpha}_a$ is a coherence measure. This can be argued as follows: for a given $\rho$, let $\alpha'$ be the value of $\alpha$ such that $C^{(\alpha')}_a(\rho) = \max_{\alpha}C^{(\alpha)}_a(\rho)$. Then, $C_{qcb}(\rho)=C^{(\alpha')}_a(\rho) \geq C^{(\alpha')}_a[\Phi(\rho)] \leq \max_{\alpha}C^{(\alpha)}_a[\Phi(\rho)]$, where $\Phi$ is an incoherent operation. Thus, it is not immediately clear that $C_{qcb}$ is a coherence measure.
Next, we can show that $\widetilde{C}_a$ is a convex weak coherence monotone. Following the same lines of the proof of Theorem \ref{thm4}, $\widetilde{C}_a$ satisfies (C1) and (C2). Moreover, convexity of $\widetilde{C}_a$ can be derived from the joint concavity of $A(\rho,\sigma)$. However, $\widetilde{C}_a$ does not satisfy strong monotonicity.
 \begin{align}
 \text{Let}~~\rho_1=\frac{1}{2}\begin{pmatrix}
1&1\\
1&1
\end{pmatrix}~~ \text{and}~~
\rho_2=\frac{1}{3}\begin{pmatrix}
1&1&1\\
1&1&1\\
1&1&1
\end{pmatrix}.\nonumber
\end{align}
Then $\widetilde{C}_a(\rho_1)=1-\sqrt{\frac{1}{2}}$, $\widetilde{C}_a(\rho_2)=1-\sqrt{\frac{1}{3}}$, and
\begin{align}
\widetilde{C}_a \left(\frac{1}{2}\rho_1\oplus\frac{1}{2}\rho_2\right)&=1-\sqrt{\frac{5}{12}}\nonumber\\
&\ne\frac{1}{2}(\widetilde{C}_a(\rho_1)+\widetilde{C}_a(\rho_2))\nonumber.
\end{align}
In conclusion, $\widetilde{C}_a$ is a convex weak coherence monotone.

\subsection{Coherence for pure states and single-qubit states}

In this subsection, we evaluate $\alpha$-affinity of coherence for pure states and single-qubit states.
For any pure state $\ket{\psi}$,
\begin{align}\label{eq18}
C^{(\alpha)}_a(\ket{\psi})=1-\sum_i|\bra{i}\psi\rangle|^{2/\alpha},
\end{align}
is a non-increasing function of $\alpha$. 
We have $C^{\alpha}_a(\ket{\psi})\rightarrow1$ when $\alpha\rightarrow0$. This is very interesting observation that all coherent pure states are almost the maximally coherent states. 
If we consider the convex roof extension of $\alpha$-affinity of coherence for a mixed state $\rho$ as, 
\begin{align}
C^{(\alpha)}_{a}(\rho):=\min_{\{p_i,\ket{\psi_i}\}}\sum_ip_iC^{(\alpha)}_a(\ket{\psi_i}),
\end{align}
then $\mbox{lim}_{\alpha \rightarrow 0} C^{(\alpha)}_{a}(\rho)=1$. That is, $\mbox{lim}_{\alpha \rightarrow 0}C^{(\alpha)}_{a}$ is a coherence measure which equals to unity when the state is coherent and is zero otherwise. 
A similar measure, namely {\it trivial coherence measure}, was discussed in Ref. \cite{trivial-coherence-measure} for which similar consequences were observed.

For a single-qubit state $\rho=\frac{1}{2}(I+\sum_ic_i\sigma_i)$ with $\sigma_i~(i=1,2,3)$ being Pauli matrices, the eigenvalues are $\lambda_{1,2}=(1\mp|{\bf c}|)/2$ and


 \begin{align}
 \rho^{\alpha}=\begin{pmatrix}
\frac{\lambda^{\alpha}_1+\lambda^{\alpha}_2}{2}+\frac{c_3(\lambda^{\alpha}_2-\lambda^{\alpha}_1)}{2|{\bf c}|} & \frac{(-c_1+ic_2)(\lambda^{\alpha}_1-\lambda^{\alpha}_2)}{2|{\bf c}|}\\
\frac{(-c_1-ic_2)(\lambda^{\alpha}_1-\lambda^{\alpha}_2)}{2|{\bf c}|}& \frac{\lambda^{\alpha}_1+\lambda^{\alpha}_2}{2}-\frac{c_3(\lambda^{\alpha}_2-\lambda^{\alpha}_1)}{2|{\bf c}|}
\end{pmatrix}.\nonumber
 \end{align}

Therefore, the corresponding $\alpha$-affinity of coherence is
\begin{align}\label{eq6}
C^{(\alpha)}_a(\rho)=1-(A+B)^{1/\alpha}-(A-B)^{1/\alpha},
\end{align}
where
\begin{align}
A&=\frac{(\frac{1-|{\bf c}|}{2})^{\alpha}+(\frac{1+|{\bf c}|}{2})^{\alpha}}{2},~~\text{and}\nonumber\\
B&=\frac{c_3((\frac{1+|{\bf c}|}{2})^{\alpha}-(\frac{1-|{\bf c}|}{2})^{\alpha})}{2|{\bf c}|}.\nonumber
\end{align}

%
%
%

\section{1/2-affinity of coherence and least square measurement}
\label{sec:coherence-lsm3}

Spehner and Orszag \cite{Spehner2017A} first revealed the connection between quantum correlation (Hellinger distance based quantum discord) and QSD with least square measurement.
In coherence theory, there is a very close relationship between geometric coherence and QSD. Authors in Ref. \cite{Xiong2018A} have recently shown that geometric coherence of $\rho$ is equal to the minimum error probability to discriminate a set of linearly independent pure states $\{\ket{\psi_i},\eta_i\}^d_{i=1}$ with von Neumann measurement, where $\ket{\psi_i}=\eta^{-1/2}_i\sqrt{\rho}\ket{i}$, $\eta_i=\rho_{ii}$ and $d=\mathrm{rank}(\rho)$. Since the optimal measurement is not easy to find, we consider the least square measurement for $\{\ket{\psi_i},\eta_i\}^d_{i=1}$.

For $\{\ket{\psi_i},\eta_i\}^d_{i=1}$, there are two cases.  If $\eta_i\ne0~(i=1,...,d)$, then the ensemble contains $d$ states.
Since $\sum_i\eta_i\ket{\psi_i}\bra{\psi_i}=\rho$, the least square measurement is
\begin{align}\label{eq20}
M^{lsm}_i=\eta_i\rho^{-1/2}\ket{\psi_i}\bra{\psi_i}\rho^{-1/2}=\ket{i}\bra{i},
\end{align}
where $\rho^{-1/2}:=\sum_i\lambda^{-1/2}\ket{a_i}\bra{a_i}$ if $\rho=\sum_i\lambda\ket{a_i}\bra{a_i}$ is the spectral decomposition. Thus, $\sum_iM^{lsm}_i=I$ and the successful probability to discriminate the ensemble $\{\ket{\psi_i},\eta_i\}^d_{i=1}$ with $\{M_i^{lsm}\}$ is
\begin{align}
P^{lsm}_S(\{\ket{\psi_i},\eta_i\}^d_{i=1})&=\sum_i\eta_i\mathrm{Tr}(M^{lsm}_i\ket{\psi_i}\bra{\psi_i})\nonumber\\
&=\sum_i\bra{i}\sqrt{\rho}\ket{i}^2\nonumber\\
&=[A^{(1/2)}(\rho)]^2.
\end{align}

If $\eta_i=0$ for some $i=i_1,i_2,...,i_s$, then
the ensemble  $\{\ket{\psi_i},\eta_i\}^d_{i=1}$ reduces to $\{\ket{\psi_{i^{\prime}}},\eta_{i^{\prime}}\}^{d-s}_{i^{\prime}=1}$. In fact, as $\eta_i=\bra{i}\rho\ket{i}=|\sqrt{\rho}\ket{i}|^2$, $\eta_i=0$ implies $\ket{\psi_i}$ is a zero vector. If $\mathcal{S}$ is the subspace spanned by $\{\ket{\psi_{i^{\prime}}}\}^{d-s}_{i^{\prime}=1}$, then
 \begin{align*}
M^{lsm}_{i^{\prime}}=\eta_{i^{\prime}}\rho^{-1/2}\ket{\psi_{i^{\prime}}}\bra{\psi_{i^{\prime}}}\rho^{-1/2}=\ket{i^{\prime}}\bra{i^{\prime}},
\end{align*}
for all $i^{\prime}$, and $\sum^{d-s}_{i^{\prime}}M^{lsm}_{i^{\prime}}=I_{\mathcal{S}}.$
Moreover, the successful probability to discriminate the ensemble $\{\ket{\psi_{i^{\prime}}},\eta_{i^{\prime}}\}^{d-s}_{i^{\prime}=1}$ with $\{M^{lsm}_{i^{\prime}}\}$ is
\begin{align*}
P^{lsm}_S(\{\ket{\psi_{i^{\prime}}},\eta_{i^{\prime}}\}^{d-s}_{i^{\prime}=1})&=\sum^{d-s}_{i^{\prime}=1}\eta_{i^{\prime}}\mathrm{Tr}(M^{lsm}_{i^{\prime}}\ket{\psi_{i^{\prime}}}\bra{\psi_{i^{\prime}}})\\
&=\sum^{d-s}_{i^{\prime}=1}\bra{i^{\prime}}\sqrt{\rho}\ket{i^{\prime}}^2\\
&=\sum^d_{i=1}\bra{i}\sqrt{\rho}\ket{i}^2\\
&=[A^{(1/2)}(\rho)]^2.
\end{align*}

In other words, the corresponding error probability to discriminate linearly independent pure states $\{\ket{\psi_i},\eta_i\}^d_{i=1}$ is
\begin{align}
P^{lsm}_E(\{\ket{\psi_i},\eta_i\})=1-P^{lsm}_S(\{\ket{\psi_i},\eta_i\}^d_{i=1})=C^{(1/2)}_a(\rho)\nonumber.
\end{align}

Thus, we have the following theorem.

\begin{thm}\label{thm1}
If quantum state $\rho$ describes a quantum system in $d$-dimensional Hilbert space $\mathcal{H}$
with $\{\ket{i}\}^d_{i=1}$ being a reference basis, then the $\alpha$-affinity of coherence of $\rho$ is equal to the error probability to discriminate $\{\ket{\psi_i},\eta_i\}^d_{i=1}$ with least square measurement. That is,
\begin{align}\label{eq8}
C^{(1/2)}_a(\rho)=P^{lsm}_E(\{\ket{\psi_i},\eta_i\}^d_{i=1}),
\end{align}
where $\eta_i=\bra{i}\rho\ket{i}$ and $\ket{\psi_i}=\eta^{-1/2}_i\sqrt{\rho}\ket{i}$.
\end{thm}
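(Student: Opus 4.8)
The plan is to instantiate the least square measurement $M^{lsm}_i=\eta_i\rho_{out}^{-1/2}\rho_i\rho_{out}^{-1/2}$ for the particular ensemble $\{\ket{\psi_i},\eta_i\}^d_{i=1}$ built from $\rho$, evaluate its success probability in closed form, and then identify the complementary error probability \eqref{eq4} with the analytic formula \eqref{eq11} for $C^{(\alpha)}_a$ specialized to $\alpha=1/2$.

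First I would verify that $\rho$ itself is the ensemble average, so that the relevant $\rho_{out}$ coincides with $\rho$ and the construction is well posed. Since $\eta_i\ket{\psi_i}\bra{\psi_i}=\sqrt{\rho}\ket{i}\bra{i}\sqrt{\rho}$, completeness $\sum_i\ket{i}\bra{i}=I$ gives $\sum_i\eta_i\ket{\psi_i}\bra{\psi_i}=\sqrt{\rho}\,I\,\sqrt{\rho}=\rho$. Substituting $\ket{\psi_i}=\eta^{-1/2}_i\sqrt{\rho}\ket{i}$ into the least square measurement and using $\rho^{-1/2}\sqrt{\rho}=I$ on the support of $\rho$, the factors of $\eta_i$ and $\sqrt{\rho}$ cancel, collapsing each element to the reference-basis projector $M^{lsm}_i=\ket{i}\bra{i}$ as recorded in \eqref{eq20}. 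With this simplification the success probability is immediate, $P^{lsm}_S=\sum_i\eta_i\,\mathrm{Tr}(\ket{i}\bra{i}\,\ket{\psi_i}\bra{\psi_i})=\sum_i\bra{i}\sqrt{\rho}\ket{i}^2$, whence $P^{lsm}_E=1-\sum_i\bra{i}\sqrt{\rho}\ket{i}^2$ matches \eqref{eq11} at $\alpha=1/2$ and establishes \eqref{eq8}.

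The one place that calls for care, and which I expect to be the main obstacle, is the degenerate case in which some $\eta_i$ vanish, so that $\rho^{-1/2}$ exists only as the generalized inverse on the support of $\rho$. Here I would observe that $\eta_i=|\sqrt{\rho}\ket{i}|^2$, so $\eta_i=0$ forces $\sqrt{\rho}\ket{i}=0$ and makes $\ket{\psi_i}$ the zero vector; these indices therefore drop out of both the ensemble and the measurement. Restricting to the subspace $\mathcal{S}$ spanned by the surviving states $\{\ket{\psi_{i'}}\}$, I would check that the remaining elements still resolve the identity $\sum_{i'}M^{lsm}_{i'}=I_{\mathcal{S}}$ and that repeating the probability computation on $\mathcal{S}$ reproduces the very same sum $\sum_i\bra{i}\sqrt{\rho}\ket{i}^2$. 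Once this bookkeeping is done, the identity \eqref{eq8} holds for arbitrary $\rho$, whether or not the $\ket{\psi_i}$ are all nonzero.
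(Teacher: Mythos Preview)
Your proposal is correct and follows essentially the same route as the paper: identify $\rho_{out}=\rho$, reduce the least square measurement to the reference-basis projectors $\ket{i}\bra{i}$, compute $P^{lsm}_S=\sum_i\bra{i}\sqrt{\rho}\ket{i}^2$, and then treat the degenerate case $\eta_i=0$ by dropping the corresponding null vectors and restricting to the support subspace $\mathcal{S}$. The paper's argument is organized as the same two-case analysis, with the same observation that $\eta_i=|\sqrt{\rho}\ket{i}|^2$ forces the vanishing $\ket{\psi_i}$ to be zero vectors and that the surviving projectors sum to $I_{\mathcal{S}}$.
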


\begin{rem}
If $\rho$ is an incoherent state, then $C^{1/2}_a(\rho)=0$ which means that $\{\ket{\psi_i},\eta_i\}_i$  can be perfectly discriminated by the least square measurement. In other words, the LSM is actually the optimal measurement.
\end{rem}

\section{Least square measurement and optimal measurement}
\label{sec:qsd-lsm4}

\subsection{QSD with LSM and $1/2$-affinity of coherence}

In this section, we review a connection between the least square measurement (as a suboptimal choice) and the optimal measurement in QSD protocol. Authors in Ref. \cite{Xiong2018A} have linked quantum state discrimination to geometric coherence.

Let us consider QSD of a set of pure states $\{\ket{\psi_i},\eta_i\}^d_{i=1}$. Denote a matrix $M$ with $M_{ij}=\sqrt{\eta_i\eta_j}\langle\psi_i|\psi_j\rangle$, $1\le i,j\le d$, that is,
 \begin{align}\label{eq10}
 M=\begin{pmatrix}
\eta_1&\sqrt{\eta_1\eta_2}\langle\psi_1|\psi_2\rangle&...& \sqrt{\eta_1\eta_d}\langle\psi_1|\psi_d\rangle\\
\sqrt{\eta_2\eta_1}\langle\psi_2|\psi_1\rangle&\eta_2&...&\sqrt{\eta_2\eta_d}\langle\psi_2|\psi_d\rangle\\
.&.&...&.\\.&.&...&.\\
\sqrt{\eta_d\eta_1}\langle\psi_d|\psi_1\rangle&\sqrt{\eta_d\eta_2}\langle\psi_d|\psi_2\rangle&...&\eta_d
\end{pmatrix}.
\end{align}

Then, $M$ is a density matrix and we call it the QSD-state of $\{\ket{\psi_i},\eta_i\}^d_{i=1}$.

\begin{thm}\cite{Xiong2018A}\label{thm5}
Let $\mathcal{H}$ be a $d$-dimensional Hilbert space and $\{\ket{i}\}^d_{i=1}$ be the computable basis, that is, $\ket{i}=(0,...,0,1,0,...,0)^t$, the $i$-th entry is $1$ for each $i$. For $\ket{\psi_i}\in\mathcal{H}$, the minimal error probability to discriminate the collection of linearly independent pure states $\{\ket{\psi_i},\eta_i\}^d_{i=1}$ is equal to the geometric coherence of the corresponding QSD-state $M$, that is,
\begin{align}\label{eq19}
P^{opt}_E(\{\ket{\psi_i},\eta_i\}^d_{i=1})= C_g(M).
\end{align}

\end{thm}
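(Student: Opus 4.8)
The plan is to compute the geometric coherence of $M$ through its fidelity form $C_g(M)=1-\max_{\sigma\in\mathcal{I}}F(M,\sigma)$, where $F(\rho,\sigma)=(\mathrm{Tr}\sqrt{\sqrt{\sigma}\rho\sqrt{\sigma}})^2$, and to show that $\max_{\sigma\in\mathcal{I}}F(M,\sigma)$ equals the optimal success probability $P^{opt}_S(\{\ket{\psi_i},\eta_i\})$, whence $C_g(M)=1-P^{opt}_S=P^{opt}_E$. The starting observation is that $M$ is the Gram matrix of the subnormalized vectors $\ket{v_i}:=\sqrt{\eta_i}\ket{\psi_i}$, i.e. $M_{ij}=\langle v_i|v_j\rangle$. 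Consequently $M$ and $\rho_{out}=\sum_i\ket{v_i}\bra{v_i}$ share their nonzero spectrum, $\mathrm{Tr}M=\sum_i\eta_i=1$, and $M\succeq 0$, so $M$ is a density matrix, as already noted.

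First I would build a purification of $M$ adapted to the reference basis: take $\ket{\Psi}=\sum_i\ket{i}_A\otimes\ket{\bar v_i}_B$, where the bar denotes complex conjugation in the reference basis, chosen so that the $A$-marginal is exactly $M$. Next I would invoke Uhlmann's theorem with $\ket{\Psi}$ fixed, $F(M,\sigma)=\max|\langle\Psi|\phi_\sigma\rangle|^2$ over purifications $\ket{\phi_\sigma}$ of $\sigma$. Writing $\sigma=\sum_i q_i\ket{i}\bra{i}\in\mathcal{I}$, the requirement that the $A$-marginal of $\ket{\phi_\sigma}$ be diagonal forces every purification into the form $\ket{\phi_\sigma}=\sum_i\sqrt{q_i}\,\ket{i}_A\otimes\ket{b_i}_B$ with $\{\ket{b_i}\}$ orthonormal. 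This yields $\langle\Psi|\phi_\sigma\rangle=\sum_i\sqrt{q_i}\langle\bar v_i|b_i\rangle$.

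Then I would carry out the two nested optimizations. For fixed orthonormal $\{\ket{b_i}\}$, maximizing $\big(\sum_i\sqrt{q_i}|\langle \bar v_i|b_i\rangle|\big)^2$ over probability vectors $\{q_i\}$ is a Cauchy--Schwarz problem whose optimum, attained at $q_i\propto|\langle \bar v_i|b_i\rangle|^2$, equals $\sum_i|\langle \bar v_i|b_i\rangle|^2$. Relabelling the conjugation and maximizing over orthonormal $\{\ket{b_i}\}$ gives
\begin{align}
\max_{\sigma\in\mathcal{I}}F(M,\sigma)=\max_{\{\ket{b_i}\}\,\text{orthonormal}}\sum_i\eta_i\,\mathrm{Tr}\big(\ket{b_i}\bra{b_i}\,\ket{\psi_i}\bra{\psi_i}\big),\nonumber
\end{align}
that is, the best success probability achievable by a rank-one von Neumann measurement.

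Finally I would identify this quantity with $P^{opt}_S$, and this is the step I expect to be the main obstacle. The easy direction $\max_{\sigma}F(M,\sigma)\le P^{opt}_S$, equivalently $C_g(M)\ge P^{opt}_E$, is immediate because each $\{\ket{b_i}\bra{b_i}\}$ is a legitimate POVM. The reverse direction requires the fact that for \emph{linearly independent} pure states the minimum-error measurement can be taken to be von Neumann (rank-one with mutually orthogonal supports), so that restricting the maximization to orthonormal $\{\ket{b_i}\}$ loses nothing. I would establish this from the Holevo--Helstrom optimality conditions, namely the existence of a Hermitian $\Lambda$ with $\Lambda-\eta_i\ket{\psi_i}\bra{\psi_i}\ge 0$ and $(\Lambda-\eta_i\ket{\psi_i}\bra{\psi_i})M_i=0$ for all $i$, together with linear independence, which forces the optimal detection operators to be rank-one and orthogonal. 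Combining both directions gives $\max_{\sigma\in\mathcal{I}}F(M,\sigma)=P^{opt}_S$, and therefore $C_g(M)=1-P^{opt}_S=P^{opt}_E$, as claimed.
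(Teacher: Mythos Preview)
The paper does not supply its own proof of this theorem: it is quoted verbatim from Ref.~\cite{Xiong2018A} (same authors) and stated without argument, so there is nothing here to compare your steps against line by line.

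That said, your proposed route is sound. The purification $\ket{\Psi}=\sum_i\ket{i}\otimes\ket{\bar v_i}$ does have $A$-marginal $M$ (since $\langle\bar v_j|\bar v_i\rangle=\overline{\langle v_j|v_i\rangle}=M_{ij}$), and parametrizing the purifications of a diagonal $\sigma$ by an orthonormal frame $\{\ket{b_i}\}$ is exactly the right way to turn the Uhlmann maximization into a search over rank-one projective measurements. Absorbing phases into $\ket{b_i}$ to make each $\langle\bar v_i|b_i\rangle\ge 0$ justifies replacing the modulus-squared of the sum by the square of the sum of moduli, after which Cauchy--Schwarz over $\{q_i\}$ gives $\sum_i|\langle\bar v_i|b_i\rangle|^2=\sum_i\eta_i|\langle\psi_i|\bar b_i\rangle|^2$, i.e.\ the success probability of the von Neumann measurement $\{\ket{\bar b_i}\bra{\bar b_i}\}$.

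The step you flag as the obstacle is genuinely the only nontrivial external input: that for $d$ linearly independent pure states spanning $\mathcal{H}$ the optimal POVM may be taken projective. Your sketch via the Holevo--Yuen--Kennedy--Lax conditions is the standard argument; linear independence forces each optimal $M_i$ to have rank at most one (from $(\Lambda-\eta_i\ket{\psi_i}\bra{\psi_i})M_i=0$ with $\Lambda-\eta_i\ket{\psi_i}\bra{\psi_i}\ge 0$ of rank $d-1$ on the span), and completeness $\sum_i M_i=I$ on a $d$-dimensional space then forces them to be orthogonal rank-one projectors. With that in hand, both inequalities close and $C_g(M)=P^{opt}_E$ follows.
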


For $1/2$-affinity and the least square measurement, there exists a similar relationship.
If we denote the corresponding QSD-state by $M$, namely, $\nu_i=M_{ii}=\eta_i$, $\ket{\varphi_i}=\nu^{-1/2}_i\sqrt{M}\ket{i}$ for each $i$, then $\langle\varphi_i\ket{\varphi_j}=(\nu_i\nu_j)^{-1/2}\bra{i}M\ket{j}=\langle\psi_i|\psi_j\rangle$, $1\le i,j\le d$. With Lemma 8 in Ref.  \cite{Xiong2018A}, there exists a unitary $V$ such that $\ket{\varphi_i}=V\ket{\psi_i}$ for each $i$.

As a result, the least square measurement for $\{\ket{\psi_i},\eta_i\}^d_{i=1}$ is
\begin{align}
M_i=\eta_i\rho^{-1/2}_{out}\ket{\psi_i}\bra{\psi_i}\rho^{-1/2}_{out}, i=1,...,d,
\end{align}
with $\rho_{out}=\sum_i\eta_i\ket{\psi_i}\bra{\psi_i}$. Since $\sigma_{out}=\sum_i \eta_i\ket{\varphi_i}\bra{\varphi_i}=V\rho_{out}V^{\dagger}$, the LSM for $\{\ket{\varphi_i},\eta_i\}^d_{i=1}$ is
\begin{align}
N_i&=\eta_i\sigma^{-1/2}_{out}\ket{\varphi_i}\bra{\varphi_i}\sigma^{-1/2}_{out}=VM_iV^{\dagger}.
\end{align}

In addition, one has
\begin{align*}
P^{lsm}_E(\{\ket{\psi_i},\eta_i\})=&\sum_i\eta_i\mathrm{tr}(M_i\ket{\psi_i}\bra{\psi_i})\\
=&\sum_i\eta_i\mathrm{tr}(N_i\ket{\varphi_i}\bra{\varphi_i})\\
=&P^{lsm}_S(\{\ket{\varphi_i},\nu_i\})\\
=&C^{(1/2)}_a(M).
\end{align*}

In conclusion, we have the following result.
\begin{thm}\label{thm6}
Let $\mathcal{H}$ be a $d$-dimensional Hilbert space and $\{\ket{i}\}^d_{i=1}$ be the computable basis, that is, $\ket{i}=(0,...,0,1,0,...,0)^t$, the $i$-th entry is $1$ for $i=1,...,d$. For $\ket{\psi_i}\in\mathcal{H},i=1,...,d$, the error probability to discriminate the collection of pure states $\{\ket{\psi_i},\eta_i\}^d_{i=1}$ with least square measurement is equal to 1/2-affinity of coherence of the corresponding QSD-state $M$, that is,
\begin{align}
P^{lsm}_E(\{\ket{\psi_i},\eta_i\}^d_{i=1})= C^{(1/2)}_a(M),
\end{align}
where the incoherent pure states are $\{\ket{i}\}^d_{i=1}$.

\end{thm}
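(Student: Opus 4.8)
The plan is to reduce the statement to Theorem~\ref{thm1}, which already identifies the $1/2$-affinity of coherence of a density matrix with the least square measurement error probability for the ensemble it generates. The guiding observation is that the least square measurement error probability of a pure-state ensemble depends only on the Gram matrix of the states together with the weights; hence it suffices to produce a density matrix whose associated ensemble, in the sense of Theorem~\ref{thm1}, shares the inner products $\langle\psi_i|\psi_j\rangle$. The QSD-state $M$ with entries $M_{ij}=\sqrt{\eta_i\eta_j}\langle\psi_i|\psi_j\rangle$ is exactly this object, so the whole argument amounts to showing that feeding $M$ into the Theorem~\ref{thm1} machine reproduces the original ensemble up to a unitary.

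First I would verify that $M$ is a legitimate density matrix: it is the Gram matrix of the vectors $\{\sqrt{\eta_i}\ket{\psi_i}\}$, hence Hermitian and positive semidefinite, with $\mathrm{Tr}\,M=\sum_i\eta_i=1$. Next I would apply the recipe of Theorem~\ref{thm1} directly to $M$, setting $\nu_i=\bra{i}M\ket{i}=\eta_i$ and $\ket{\varphi_i}=\nu_i^{-1/2}\sqrt{M}\ket{i}$; a one-line computation gives $\langle\varphi_i|\varphi_j\rangle=(\nu_i\nu_j)^{-1/2}\bra{i}M\ket{j}=\langle\psi_i|\psi_j\rangle$, so the families $\{\ket{\varphi_i}\}$ and $\{\ket{\psi_i}\}$ have identical Gram matrices. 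Theorem~\ref{thm1} then yields $C^{(1/2)}_a(M)=P^{lsm}_E(\{\ket{\varphi_i},\nu_i\})$ at once, and the only remaining task is to match this with $P^{lsm}_E(\{\ket{\psi_i},\eta_i\})$.

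For that final identification I would invoke Lemma~8 of Ref.~\cite{Xiong2018A}: two families of vectors with equal Gram matrices are related by a unitary, so $\ket{\varphi_i}=V\ket{\psi_i}$ for all $i$. Because the weights coincide ($\eta_i=\nu_i$), the output state transforms as $\sigma_{out}=\sum_i\eta_i\ket{\varphi_i}\bra{\varphi_i}=V\rho_{out}V^{\dagger}$, whence the least square measurement operators transform covariantly, $N_i=VM_iV^{\dagger}$, and cyclicity of the trace gives $\eta_i\,\mathrm{Tr}(N_i\ket{\varphi_i}\bra{\varphi_i})=\eta_i\,\mathrm{Tr}(M_i\ket{\psi_i}\bra{\psi_i})$ termwise. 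Summing shows the two error probabilities agree, and chaining the equalities delivers $P^{lsm}_E(\{\ket{\psi_i},\eta_i\})=P^{lsm}_E(\{\ket{\varphi_i},\nu_i\})=C^{(1/2)}_a(M)$.

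The step I expect to require the most care is the unitary-invariance argument when the $\{\ket{\psi_i}\}$ are linearly dependent, which is permitted here since Theorem~\ref{thm6} drops the independence hypothesis of Theorem~\ref{thm5}. Then $M$ is rank-deficient, $\rho_{out}^{-1/2}$ must be read as the pseudo-inverse on the support, and $V$ is only pinned down as an isometry from $\mathrm{span}\{\ket{\psi_i}\}$ onto $\mathrm{span}\{\ket{\varphi_i}\}$ with $\sum_i N_i=I_{\mathcal{S}}$ holding only on the support $\mathcal{S}$. I would dispose of this exactly as in the $\eta_i=0$ reduction used for Theorem~\ref{thm1}: extend $V$ arbitrarily to a full unitary and note that the error probability is insensitive to the extension since every $\ket{\varphi_i}$ already lies in $\mathcal{S}$, so $\sigma_{out}^{-1/2}=V\rho_{out}^{-1/2}V^{\dagger}$ on the relevant subspace and the covariance $N_i=VM_iV^{\dagger}$ survives.
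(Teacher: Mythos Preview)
Your proposal is correct and follows essentially the same route as the paper: construct the auxiliary ensemble $\ket{\varphi_i}=\nu_i^{-1/2}\sqrt{M}\ket{i}$ from the QSD-state $M$, observe that it shares the Gram data with $\{\ket{\psi_i}\}$, invoke Lemma~8 of Ref.~\cite{Xiong2018A} to obtain the unitary $V$, and use the covariance $N_i=VM_iV^{\dagger}$ together with Theorem~\ref{thm1} to conclude. Your treatment is in fact slightly more careful than the paper's, since you explicitly check that $M$ is a density matrix and you spell out the rank-deficient (linearly dependent) case via the pseudo-inverse and isometric extension, which the paper leaves implicit.
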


\subsection{Least square measurement and optimal measurement}

First, we recall the following result.

\begin{thm}\cite{Barnum2002,Spehner2014}
\label{thm-lsm1}
Let $\{\rho_i,\mu_i\}^m_{i=1}$ to be an ensemble of $m$ states of a system in an $n$-dimensional Hilbert space $\mathcal{H}$ ($m\le n$), then
\begin{align}\label{}
P^{opt}_S(\{\rho_i,\mu_i\}^m_{i=1})\le \sqrt{P^{lsm}_S(\{\rho_i,\mu_i\}^m_{i=1})}.
\end{align}
\end{thm}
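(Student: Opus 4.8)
The plan is to bound the success probability of an arbitrary POVM---in particular the optimal one---by the least-square success probability, via a symmetric factorization of the weighted states followed by a twofold use of the Cauchy--Schwarz inequality. Throughout I would write $\tilde{\rho}_i=\mu_i\rho_i$ for the weighted states and $\rho_{out}=\sum_i\tilde{\rho}_i$, and work on the support of $\rho_{out}$, so that $\rho_{out}^{-1/2}$ denotes the inverse on that support. The hypothesis $m\le n$ is harmless here, since each $\tilde{\rho}_i$ is supported inside $\mathrm{supp}(\rho_{out})$ and any POVM may be restricted accordingly.

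First I would fix a POVM $\{M_i\}$ attaining $P^{opt}_S$, so that $P^{opt}_S=\sum_i\mathrm{Tr}(M_i\tilde{\rho}_i)$. The starting point is the defining relation of the least-square measurement, $M^{lsm}_i=\mu_i\rho_{out}^{-1/2}\rho_i\rho_{out}^{-1/2}=\rho_{out}^{-1/2}\tilde{\rho}_i\rho_{out}^{-1/2}$, which I rewrite as $\tilde{\rho}_i=\rho_{out}^{1/2}M^{lsm}_i\rho_{out}^{1/2}$. Substituting this into $P^{opt}_S$ and splitting each factor $\rho_{out}^{1/2}$ symmetrically as $\rho_{out}^{1/4}\rho_{out}^{1/4}$, I introduce the Hermitian positive operators $S_i:=\rho_{out}^{1/4}M_i\rho_{out}^{1/4}$ and $R_i:=\rho_{out}^{1/4}M^{lsm}_i\rho_{out}^{1/4}$. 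Cyclicity of the trace then recasts both quantities of interest as Hilbert--Schmidt expressions: $P^{opt}_S=\sum_i\mathrm{Tr}(S_iR_i)$ and, by the same manipulation applied to $M^{lsm}_i$, $P^{lsm}_S=\sum_i\mathrm{Tr}(R_i^2)$.

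Next I would apply Cauchy--Schwarz twice: once per summand in the Hilbert--Schmidt inner product, $\mathrm{Tr}(S_iR_i)\le\sqrt{\mathrm{Tr}(S_i^2)}\,\sqrt{\mathrm{Tr}(R_i^2)}$, and then over the index $i$, regarding $\big(\sqrt{\mathrm{Tr}(S_i^2)}\big)_i$ and $\big(\sqrt{\mathrm{Tr}(R_i^2)}\big)_i$ as $\ell^2$ vectors. This gives $P^{opt}_S\le\big(\sum_i\mathrm{Tr}(S_i^2)\big)^{1/2}\big(\sum_i\mathrm{Tr}(R_i^2)\big)^{1/2}$, whose second factor is exactly $\sqrt{P^{lsm}_S}$. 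It then remains only to show the normalization bound $\sum_i\mathrm{Tr}(S_i^2)\le 1$.

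I expect this last step to be the crux of the argument. I would prove it by noting that $0\le M_i\le I$ forces $0\le S_i\le\rho_{out}^{1/2}$, since conjugation $X\mapsto\rho_{out}^{1/4}X\rho_{out}^{1/4}$ preserves operator order; hence $\rho_{out}^{1/2}-S_i\ge0$. Because the trace of a product of two positive operators is nonnegative, $\mathrm{Tr}\big(S_i(\rho_{out}^{1/2}-S_i)\big)\ge0$, i.e. $\mathrm{Tr}(S_i^2)\le\mathrm{Tr}(S_i\rho_{out}^{1/2})$. Summing over $i$ and using POVM completeness $\sum_iM_i=I$, hence $\sum_iS_i=\rho_{out}^{1/2}$, yields $\sum_i\mathrm{Tr}(S_i^2)\le\mathrm{Tr}(\rho_{out}^{1/2}\rho_{out}^{1/2})=\mathrm{Tr}(\rho_{out})=1$. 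Combining the estimates and taking the maximum over POVMs on the left-hand side delivers $P^{opt}_S\le\sqrt{P^{lsm}_S}$.
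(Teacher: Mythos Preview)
Your argument is correct and is essentially the standard Barnum--Knill proof from \cite{Barnum2002} (see also \cite{Spehner2014}): factor $\tilde\rho_i=\rho_{out}^{1/2}M_i^{lsm}\rho_{out}^{1/2}$, apply Cauchy--Schwarz twice, and close with the normalization bound $\sum_i\mathrm{Tr}(S_i^2)\le\mathrm{Tr}(\rho_{out})=1$. The paper itself does not give a proof of this theorem---it merely recalls it from the cited references---so there is no alternative approach to compare against; your derivation is precisely the one those references supply.
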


As $P^{opt}_E=1-P^{opt}_S$ is the minimal error probability of QSD, the error probability with LSM is
 \begin{align}\label{eq22}
 P^{lsm}_E=1-P^{lsm}_S\le 1-(P^{opt}_S)^2\le 2P^{opt}_E.
 \end{align}
Therefore, if $P^{opt}_E$ is very close to 0, so is $P^{lsm}_E$. In fact, LSM is very close to the optimal measurement for almost orthogonal states.

As the LSM to discriminate a set of pure states is actually a von Neumann measurement and the result of Theorem \ref{thm5}, one has
\begin{align*}
2C_g(\rho)\ge C^{(1/2)}_a(\rho)\ge C_g(\rho)\ge \widetilde{C}_a(\rho),
\end{align*}
for any $\rho$. The last inequality is due to Theorem \ref{thm-lsm1} above and Theorem 1 of Ref. \cite{Xiong2018A} as follows: 
$C_g(\rho) \ge 1-P_S^{opt}(\{\rho_i,\mu_i\}_{i=1}^m) \ge 1-\sqrt{P_S^{lsm}(\{\rho_i,\mu_i\}_{i=1}^m)} = \tilde{C}_a(\rho)$.
In addition, since $C_g(\rho)\le\frac{C_{l_1}(\rho)}{d-1}$ for any $\rho>0$ (that is, $\rho$ is invertible) \cite{Xiong2018A}, where $C_{l_1}(\rho)$ is $l_1$-norm of coherence defined as $C_{l_1}(\rho):=\sum_{i\ne j}|\bra{i}\rho\ket{j}|$, we have that for any $\rho>0$, the following inequality holds
\begin{align*}
\frac{2}{d-1}C_{l_1}(\rho)\ge 2C_g(\rho)\ge C^{(1/2)}_a(\rho)\ge C_g(\rho)\ge \widetilde{C}_a(\rho).
\end{align*}

On the other hand, we consider the connection between least square measurement and optimal measurement through coherence.

In Ref. \cite{zhanghj2017}, Zhang {\it et al.} gave an upper bound for geometric coherence as
\begin{align}\label{eq12}
C_g(\rho)\le \min\{l_1,l_2\},
\end{align}
where $l_1=1-\max_i\{\rho_{ii}\}$ and $l_2=1-\sum_ib^2_{ii}$ with $b_{ij}$ being the $(i,j)$-th entry of $\sqrt{\rho}$. This is interesting to note that $l_2$ is actually equal to $C^{(1/2)}_a(\rho)$, and moreover, they also show that $l_2$ is tight for the maximally coherent mixed states given by
 \begin{align}
 \rho_m=p\ket{\psi_d}\bra{\psi_d}+\frac{1-p}{d}I_d,
 \end{align}
where $0<p<1$, and $\ket{\psi_d}=\frac{1}{\sqrt{d}}\sum_i\ket{i}$ is the maximally coherent state.

In other words, one has
 \begin{align}\label{eq13}
 C_g(\rho_m)=C^{(1/2)}_a(\rho_m).
 \end{align}

 Combining Theorem \ref{thm5}, Theorem \ref{thm4} and \eqref{eq13}, we recover the following result.

\begin{thm}\cite{Yuen1975,Helstrom1976}\label{thm3}
For the equiprobable quantum state discrimination task $\{\ket{\phi_i},1/d\}^d_{i=1}$ with $\langle\phi_i\ket{\phi_j}=p$ for $i\ne j$, the least square measurement is optimal. Moreover, the maximum successful probability is
\begin{align*}
P^{opt}_S (\{\ket{\phi_i},1/d\}^d_{i=1})=\left[\frac{d-1}{d}\sqrt{1-p}+\frac{1}{d}\sqrt{1-p+dp}\right]^2.
\end{align*}
\end{thm}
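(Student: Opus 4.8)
The plan is to recognize that the QSD-state of this fully symmetric equiprobable ensemble is exactly the maximally coherent mixed state $\rho_m$, and then let the coincidence $C_g(\rho_m)=C^{(1/2)}_a(\rho_m)$ of Eq.~\eqref{eq13} force the least square measurement to be optimal. First I would write down the QSD-state $M$ of $\{\ket{\phi_i},1/d\}^d_{i=1}$ from Eq.~\eqref{eq10}. Since $\eta_i=1/d$ for all $i$ and $\langle\phi_i|\phi_j\rangle=p$ for $i\neq j$, its entries are $M_{ii}=1/d$ and $M_{ij}=p/d$ for $i\neq j$, so that
\begin{align*}
M=\frac{1-p}{d}I_d+\frac{p}{d}\sum_{i,j}\ket{i}\bra{j}=\frac{1-p}{d}I_d+p\ket{\psi_d}\bra{\psi_d},
\end{align*}
where I used $\sum_{i,j}\ket{i}\bra{j}=d\ket{\psi_d}\bra{\psi_d}$. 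This is precisely $\rho_m$.

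Next I would invoke the two operational identities already established. By Theorem~\ref{thm5}, the minimal error probability equals $P^{opt}_E(\{\ket{\phi_i},1/d\})=C_g(M)=C_g(\rho_m)$, while by Theorem~\ref{thm6} the LSM error probability equals $P^{lsm}_E(\{\ket{\phi_i},1/d\})=C^{(1/2)}_a(M)=C^{(1/2)}_a(\rho_m)$. Equation~\eqref{eq13} asserts that these two quantities agree on $\rho_m$, hence $P^{opt}_E=P^{lsm}_E$ and therefore $P^{opt}_S=P^{lsm}_S$; since the LSM attains the minimal error probability, it is optimal. No appeal to the bound of Theorem~\ref{thm-lsm1} is needed here, because \eqref{eq13} closes the gap exactly.

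It then remains to evaluate $P^{opt}_S=1-C^{(1/2)}_a(\rho_m)=\sum_i\bra{i}\sqrt{\rho_m}\ket{i}^2$ via Eq.~\eqref{eq11}. I would diagonalize $\rho_m$, which has eigenvalue $\lambda_+=\frac{1-p+dp}{d}$ on $\ket{\psi_d}$ and eigenvalue $\lambda_-=\frac{1-p}{d}$ with multiplicity $d-1$ on the orthogonal complement, so that $\sqrt{\rho_m}=\sqrt{\lambda_+}\ket{\psi_d}\bra{\psi_d}+\sqrt{\lambda_-}(I_d-\ket{\psi_d}\bra{\psi_d})$. Using $|\langle i|\psi_d\rangle|^2=1/d$, every diagonal entry is $\bra{i}\sqrt{\rho_m}\ket{i}=\frac{1}{d}\sqrt{\lambda_+}+\frac{d-1}{d}\sqrt{\lambda_-}$, independent of $i$; substituting $\sqrt{\lambda_+}=\sqrt{1-p+dp}/\sqrt{d}$ and $\sqrt{\lambda_-}=\sqrt{1-p}/\sqrt{d}$ and summing the $d$ identical squared terms reproduces the stated formula.

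The computation in the last step is entirely routine. The real content is the opening identification: the QSD-state of the fully symmetric equiprobable ensemble is exactly the one family of states $\rho_m$ for which geometric coherence and $1/2$-affinity of coherence are known to coincide. That coincidence is the crux that collapses the suboptimal LSM onto the optimal measurement, and it is the one nontrivial input on which the whole argument turns.
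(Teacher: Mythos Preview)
Your proof is correct and follows essentially the same route as the paper: identify the QSD-state as $\rho_m$, invoke Theorems~\ref{thm5} and~\ref{thm6} together with Eq.~\eqref{eq13} to force $P^{opt}_E=P^{lsm}_E$, and then read off the explicit success probability. The only minor omission is that applying Theorem~\ref{thm5} requires linear independence of the $\ket{\phi_i}$, which the paper notes explicitly and which follows here since the Gram matrix $dM$ has eigenvalues $1-p+dp>0$ and $1-p>0$.
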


\begin{proof}
Note that the QSD-state of the above-mentioned task is $\rho_m$. As
\begin{align*}
&P^{opt}_E (\{\ket{\phi_i},1/d\}^d_{i=1})=C_g(\rho_m)\\
=&C^{(1/2)}_a(\rho_m)=P^{lsm}_E(\{\ket{\phi_i},1/d\}^d_{i=1}),
\end{align*}
then the least square measurement is optimal. The first equality is the result of Theorem \ref{thm5} and the fact that $\{\ket{\phi_i}\}$ is linearly independent. The last equality is due to Theorem \ref{thm6}.
Using the result in Ref. \cite{zhanghj2017},
\begin{align*}
C_g(\rho_m)=1-\left[\frac{d-1}{d}\sqrt{1-p}+\frac{1}{d}\sqrt{1-p+dp}\right]^2,
\end{align*}
the maximum successful probability is
\begin{align*}
P^{opt}_S (\{\ket{\phi_i},1/d\}^d_{i=1})=\left[\frac{d-1}{d}\sqrt{1-p}+\frac{1}{d}\sqrt{1-p+dp}\right]^2,
\end{align*}
and the corresponding optimal measurement is
\begin{align*}
M^{opt}_i=\frac{1}{d}\rho^{-1/2}_{out}\ket{\phi_i}\bra{\phi_i}\rho^{-1/2}_{out},
\end{align*}
where $\rho_{out}=\frac{1}{d}\sum_i\ket{\phi_i}\bra{\phi_i} ~(i=1,...,d,)$.
\end{proof}

\section{When is LSM optimal?}
\label{sec:qsd-lsm5}

Theorem \ref{thm3} indicates that LSM is optimal for the equiprobable case. However, we find that this is not the only case as discussed below.

\subsection{Two pure states case}

Since we have the explicit expressions of geometric coherence and $1/2$-affinity of coherence for single-qubit states, we can derive the condition for LSM being optimal for an ensemble containing two pure states.
Given an ensemble $\{\ket{\psi_i},\eta_i\}^2_{i=1}$, the corresponding QSD-state is a single-qubit state $\rho=\sum_ic_i\sigma_i$.
From Eq. \eqref{eq6}, one has
 \begin{align*}
 [A^{(1/2)}(\rho)]^2=\frac{1}{2}\left(1+\sqrt{1-|{\bf c}|^2}+\frac{c^2_3}{1+\sqrt{1-|{\bf c}|^2}}\right).
 \end{align*}

On the other hand, with fidelity $F(\rho,\sigma):=\mathrm{tr}\sqrt{\sqrt{\sigma}\rho\sqrt{\sigma}}$,
 \begin{align*}
 F(\rho):=\max_{\sigma\in\mathcal{I}}F(\rho,\sigma)=\sqrt{\frac{1}{2}\left(1+\sqrt{1-c^2_1-c^2_2}\right)}.
 \end{align*}

The above expressions reduce to simpler forms when $\rho$ is a pure state ($|\bf{c}|$ = $\sqrt{c_1^2+c_2^2+c_3^2} = 1$). That is, $[A^{(1/2)}(\rho)]^2=\frac{1}{2}(1+c^2_3)$ and $F^2(\rho)=\frac{1}{2}(1+|c_3|)$. Then, $A^{(1/2)}(\rho)=F(\rho)$ if and only if $c_3=0$ or $\pm1$. The same can be shown true for mixed states with some tedious calculation. Hence, the least square measurement is optimal for two pure states case if and only if these states are orthogonal or have equal probabilities.

\subsection{Multiple copy QSD with LSM}

We consider QSD protocol with multiple copies, as the error probability of a QSD task decreases when we have more copies of states.

For the $N$-copy case $\{\ket{\psi_i}^{\otimes N},\eta_i\}^d_{i=1}$, the $(i,j)$-th entry of the corresponding QSD-state is
\begin{align}
\rho^{(N)}_{ij}=\sqrt{\eta_i\eta_j}\langle\psi_i\ket{\psi_j}^N~(1\le i,j\le d).\nonumber
\end{align}

Let $N\rightarrow\infty$ and $\rho^{(N)}_{ij}\rightarrow0$ for each $i\ne j$. Since $\{\ket{\psi_i}^{\otimes N}\}^d_{i=1}$ is linearly independent for large $N$, the QSD-state $\rho^{(N)}$ is invertible. Then,
\begin{align*}
C^{(1/2)}_a(\rho)\le \frac{2}{d-1}C_{l_1}(\rho),
\end{align*}
and the error probability to discriminate $\{\ket{\psi_i}^{\otimes N},\eta_i\}^d_{i=1}$ tends to zero. In other words, if we have enough copies of states, pure states $\{\ket{\psi_i},\eta_i\}^d_{i=1}$ can be almost perfectly distinguished by the LSM. In other words, we prove that LSM is asymptotically optimal for discrimination of pure states in the sense that the corresponding QSD-state $\rho\rightarrow\rho^{diag}=\sum_i\bra{i}\rho\ket{i}\ket{i}\bra{i}$.

\section{Duality between 1/2-affinity of coherence and path distinguishability}
\label{sec:duality6}

Wave-particle duality is an intriguing but central concept in quantum physics. In double-slit interference experiment, a single quantum object can exhibit the wave nature as long as knowledge about the path chosen by the object is uncertain. More knowledge about the path corresponds to poor interference. In this direction, quantitative relations in the form of trade-off between wave and particle aspects were studied by Greenberger-Yasin \cite{greenberger-yasin} and Englert \cite{BGEnglert}, respectively. Englert, in his famous paper \cite{BGEnglert}, derived a path-visibility duality relation for the optimal detector measurement for two paths as follows:
\begin{equation}
\mathcal{V}^2 + \mathcal{D}^2 \leq 1,
\end{equation}
where $\mathcal{V}$ is the visibility of the interference pattern and $\mathcal{D}$ is a measure of path distinguishability or which-way information.
Recently, Bera {\it et al.} \cite{Bera2015} obtained a complementarity relation between $l_1$- norm of coherence and path distinguishability in the case of Yang's $n$-slit experiment. Here, although we are unable to provide a general proof for mixed states in arbitrary dimensions, we establish the complementarity between 1/2-affinity of coherence and path distinguishability for some special cases.

Consider the case of $d$-slit quantum interference with pure quantons. In the Yang's $n$-slit experiment, if the quanton passes through the $i$th slit or takes the $i$th path, then we denote $\ket{i}$ as the possible state. As a result, the state of the quanton can be represented with $d$ basis states $\{\ket{1},...,\ket{d}\}$ as
\begin{align}
\ket{\Psi}=c_1\ket{1}+...+c_d\ket{d},
\end{align}
where $\ket{i}$ represents the $i$th slit and $c_i$ is the amplitude of taking the $i$th slit. To determine through which slit the quanton passes, one needs to perform a quantum measurement. According to quantum measurement theory, the quanton will interact with a detector state and the compound state is given by
\begin{align}
U(\ket{\Psi}\ket{0_d})=\sum_ic_i\ket{i}\ket{d_i},
\end{align}
where $\{\ket{d_i}\}$ are normalized but not necessarily orthogonal states of the detector.

To quantify the coherence of quanton, one considers the reduced density matrix of the quanton after tracing out the detector states,
\begin{align}
\rho_s=\sum^d_{i,j=1}c_i\bar{c}_j\langle d_j\ket{d_i}\ket{i}\bra{j}.
\end{align}

From Theorem \ref{thm1}, the $1/2$-affinity of coherence is
\begin{align}
C^{(1/2)}_a(\rho_s)=1-P^{lsm}_S(\{\ket{\psi_i},\eta_i\}^d_{i=1}),
\end{align}
where $\eta_i=|c_i|^2,\ket{\psi_i}=\exp(\sqrt{-1}\theta_i)\eta^{-1/2}_i\sqrt{\rho_s}\ket{i}$ and $\theta_i$ is the argument of $c_i$.

Now, to know which path the quanton takes, one has to discriminate the detector states $\{\ket{d_i},|c_i|^2\}^d_{i=1}$. In other words, the path distinguishability is actually equivalent to the discrimination of the corresponding detector states.

Since $\langle \psi_i\ket{\psi_j}=\langle d_j\ket{d_i}=\langle \overline{d_i}\ket{\overline{d_j}}$, there exists a unitary matrix $V$ such that $\ket{d_i}=V\ket{\overline{\psi_i}}$.
Therefore, one has
\begin{align*}
\rho_{out}=\sum_i|c_i|^2\ket{d_i}\bra{d_i}=V\sum_i|c_i|^2\ket{\overline{\psi_i}}\bra{\overline{\psi_i}}V^{\dagger}=V\overline{\rho_s}V^{\dagger},
\end{align*}
and then the corresponding LSM for $\{\ket{d_i},|c_i|^2\}$ is
\begin{align*}
N^{lsm}_i=|c_i|^2\rho^{-1/2}_{out}\ket{d_i}\bra{d_i}\rho^{-1/2}_{out}=V\ket{i}\bra{i}V^{\dagger}.
\end{align*}

As a result, one has
\begin{align*}
&P^{lsm}_S \left(\{\ket{d_i},|c_i|^2\}^d_{i=1}\right)=\sum_i|c_i|^2|\bra{i}V^{\dagger}\ket{d_i}|^2\\
=&\sum_i|\bra{i}\sqrt{\overline{\rho_s}}\ket{i}|^2=P^{lsm}_S(\{\ket{\psi_i},|c_i|^2\}^d_{i=1}).
\end{align*}

Even though it is not the optimal choice for quantum state discrimination, LSM is very close to the optimal one when the states to be distinguished are almost orthogonal, and its construction is also relatively simple. Moreover, the complementarity between coherence and path distinguishability holds just for linearly independent detector states \cite{Bera2015,Xiong2018A}. Therefore, if we define the optimal successful probability to discriminate the detector states with LSM as path distinguishability, $D_q:=P^{lsm}_S(\{\ket{d_i},|c_i|^2\}^d_{i=1})$,
and the $1/2$-affinity of coherence as coherence, $C:=C^{(1/2)}_a(\rho_s)$,
we obtain the complementarity between $1/2$-affinity of coherence and path distinguishability as
\begin{align}\label{eq21}
C+D_q=1.
\end{align}

Thus, the wave nature of the quanton can also be characterized by $C^{(1/2)}_a(\rho_s)$. If the quantum system is exposed to the environment, that is, the quanton state is a mixed state  $\rho=\sum_{i,j}\rho_{ij}\ket{i}\bra{j}$, we can obtain a generalized complementarity. The composite system of the quanton and the path detector after the unitary interaction can be given as
\begin{align}
\rho_{sd}=\sum_{i,j}\rho_{ij}\ket{i}\bra{j}\otimes\ket{d_i}\bra{d_j},
\end{align}
and the reduced density matrix of the quanton after tracing out the detector states is
\begin{align}\label{eq24}
\rho_s=\sum^d_{i,j=1}\rho_{ij}\langle d_j\ket{d_i}\ket{i}\bra{j}.
\end{align}

As every principal $2\times2$ submatrix in Eq. \eqref{eq24} is positive semidefinite \cite[p.434]{Horn:2012:MA:2422911}, we have
\begin{align}
\sqrt{\rho_{ii}\rho_{jj}}-|\rho_{ij}|\ge0 ~(1\le i,j\le d).
\end{align}
Assuming that the corresponding ensemble to $\rho_s$ is $\{\ket{\psi_i},\rho_{ii}\}$, we have
\begin{align*}
|\langle\psi_i\ket{\psi_j}|=\frac{|\bra{i}\rho_s\ket{j}|}{\sqrt{\rho_{ii}\rho_{jj}}}=\frac{|\rho_{ij}|}{\sqrt{\rho_{ii}\rho_{jj}}}|\langle d_i\ket{d_j}|\le|\langle d_i\ket{d_j}|,
\end{align*}
for each $i$ and $j$. In other words, a pair of states in $\{\ket{d_i},\rho_{ii}\}^d_{i=1}$ is more difficult to distinguish than the corresponding pair in $\{\ket{\psi_i},\rho_{ii}\}^d_{i=1}$. 
%
%
Suppose $\rho^{\prime}_s=\sum^d_{i,j=1}\sqrt{\rho_{ii}\rho_{jj}}\langle d_j\ket{d_i}\ket{i}\bra{j}$. In the cases where there exists an incoherent operation $\Phi$ (see Appendix \ref{app4}), we have
\begin{align}
\Phi(\rho^{\prime}_s)=\rho_s.
\end{align}

As $C^{(1/2)}_a$ is a coherence measure, we have
\begin{align*}
1-C^{(1/2)}_a(\rho_s)&\ge 1-C^{(1/2)}_a(\rho^{\prime}_s) \nonumber \\
&=P^{lsm}_S(\{\ket{d_i},\rho_{ii}\}^d_{i=1}) \equiv D_q.
\end{align*}

Hence, we have the following complementarity relation between coherence and path distinguishability,
\begin{align}
C+D_q\le1.
\end{align}

\section{Conclusion}
\label{sec:conclusion7}

In this paper, we have introduced a family of coherence measures, namely $\alpha$-affinity of coherence for $\alpha\in(0,1)$. Moreover, we obtained the analytic formulae for these quantifiers and also studied their convex roof extension. In particular, we have offered an operational meaning for $1/2$-affinity of coherence, by showing that this equals the error probability to discriminate a set of pure states with least square measurement. Based on the relationship between the LSM and the optimal measurement, we obtained the optimal measurement for the equiprobable quantum state discrimination. Furthermore, we obtained conditions for the LSM to be the optimal measurement for two pure states from the perspective of coherence theory. In addition, we also studied the multiple copy QSD and concluded that LSM is optimal in the asymptotical sense. At last, we established the complementary relationship between $1/2$-affinity of coherence and path distinguishability.

Our results not only offer a class of bona fide coherence quantifiers, but also reveal a close link between the quantification of coherence and quantum state discrimination. However, the operational interpretation of general $\alpha$-affinity coherence needs further investigation. \\

\begin{note}
After the completion of this work we were informed by Hyukjoon Kwon that $1/2$-affinity of coherence has been computed and proven to be a coherence measure independently in Refs. \cite{Yucs2017,Kwon2018} by different methods, yielding the same result.
\end{note}

\begin{acknowledgments}
Authors thank Jon Tyson and the anonymous Referee(s) for their useful comments. This project is supported by National Natural Science Foundation of China (Grants No.11171301 and No. 11571307).
\end{acknowledgments}

\appendix

\section{$A^{(\alpha)}$ is bounded}\label{app3}
\begin{prop}
 $0\le A^{(\alpha)}(\rho,\sigma)\le1$, with $A^{(\alpha)}(\rho,\sigma)=1$ if and only if $\rho=\sigma$.
\end{prop}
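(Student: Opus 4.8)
The plan is to reduce everything to the joint spectral picture of the two states. Write the spectral decompositions $\rho=\sum_i p_i\ket{u_i}\bra{u_i}$ and $\sigma=\sum_j q_j\ket{v_j}\bra{v_j}$ with orthonormal bases $\{\ket{u_i}\}$, $\{\ket{v_j}\}$ and $p_i,q_j\ge 0$, $\sum_i p_i=\sum_j q_j=1$. Nonnegativity is then immediate: since $\rho^{\alpha}$ and $\sigma^{1-\alpha}$ are positive semidefinite, $A^{(\alpha)}(\rho,\sigma)=\mathrm{Tr}(\rho^{\alpha/2}\sigma^{1-\alpha}\rho^{\alpha/2})\ge 0$. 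Expanding the trace in the two eigenbases gives the key representation
\[
A^{(\alpha)}(\rho,\sigma)=\sum_{i,j} p_i^{\alpha}\, q_j^{1-\alpha}\,|\langle u_i|v_j\rangle|^2 .
\]

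For the upper bound I would apply the scalar Young inequality $a^{\alpha}b^{1-\alpha}\le \alpha a+(1-\alpha)b$ (valid for $a,b\ge 0$ and $\alpha\in(0,1)$) to each factor $p_i^{\alpha}q_j^{1-\alpha}$, and then invoke the completeness relations $\sum_j|\langle u_i|v_j\rangle|^2=1$ and $\sum_i|\langle u_i|v_j\rangle|^2=1$. Summing yields $A^{(\alpha)}(\rho,\sigma)\le \alpha\sum_i p_i+(1-\alpha)\sum_j q_j=\alpha+(1-\alpha)=1$, which is the desired bound. The direction $\rho=\sigma\Rightarrow A^{(\alpha)}=\mathrm{Tr}\,\rho=1$ is trivial, so it remains only to establish the nontrivial direction of the equality condition.

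For the equality case I would track when the Young step is tight. Since $\alpha\in(0,1)$, the equality $a^{\alpha}b^{1-\alpha}=\alpha a+(1-\alpha)b$ forces $a=b$; hence $A^{(\alpha)}(\rho,\sigma)=1$ forces $p_i=q_j$ for every pair $(i,j)$ with $\langle u_i|v_j\rangle\neq 0$. I would then pass to spectral projections: letting $P_\lambda,Q_\mu$ denote the eigenprojections of $\rho,\sigma$ associated with eigenvalues $\lambda,\mu$, the pointwise condition above translates into $P_\lambda Q_\mu=0$ whenever $\lambda\neq\mu$. Inserting $I=\sum_\lambda P_\lambda$ gives $Q_\mu=\sum_\lambda P_\lambda Q_\mu=P_\mu Q_\mu$, and the symmetric computation gives $P_\mu=Q_\mu P_\mu$; comparing ranges then forces $P_\mu=Q_\mu$ for every $\mu$, whence $\rho=\sigma$.

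The main obstacle is the equality analysis in the presence of spectral degeneracy: the pointwise identity $p_i=q_j$ only constrains \emph{overlapping} eigenvectors, so one cannot read off $\rho=\sigma$ term by term from the expanded trace. The projection argument is designed precisely to circumvent this, upgrading the pairwise information to the basis-free operator identities $Q_\mu=P_\mu Q_\mu$ without having to choose eigenvectors inside degenerate eigenspaces. Some care is also needed to confirm that this rules out $\rho$ and $\sigma$ having different spectra, which it does: $Q_\mu\neq 0$ together with $Q_\mu=P_\mu Q_\mu$ forces $P_\mu\neq 0$, so every eigenvalue of $\sigma$ is one of $\rho$ and vice versa.
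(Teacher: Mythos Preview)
Your proof is correct, but it follows a different route from the paper's. The paper does not expand in eigenbases or use Young's inequality. Instead, for the upper bound it invokes the monotonicity of $A^{(\alpha)}$ under CPTP maps (the data processing inequality, cited from Audenaert--Datta) applied to the complete dephasing channel $\Phi(\cdot)=\sum_x\bra{x}\cdot\ket{x}\,\ket{x}\bra{x}$ in an arbitrary basis, which reduces the bound to the commuting case; it then finishes with Jensen's inequality for the concave function $t\mapsto t^{\alpha}$. For the equality condition, the paper argues that equality in the chain forces $\bra{x}\rho\ket{x}=\bra{x}\sigma\ket{x}$ for every $x$ and every basis, which determines the state. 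Your approach has the advantage of being entirely self-contained: it does not import the data processing inequality, and the spectral-projection step you use to handle degeneracies is a clean way to extract $\rho=\sigma$ directly from the Young equality condition. The paper's approach is more conceptual but leans on a nontrivial external theorem; yours is more hands-on and arguably more transparent about where equality is forced.
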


\begin{proof}
As $\rho^{\frac{\alpha}{2}}\sigma^{1-\alpha}\rho^{\frac{\alpha}{2}}$ is a positive matrix, one has
\begin{align}
\mathrm{Tr}(\rho^{\alpha}\sigma^{1-\alpha})=\mathrm{Tr}(\rho^{\frac{\alpha}{2}}\sigma^{1-\alpha}\rho^{\frac{\alpha}{2}})\ge0\nonumber.
\end{align}

The other part can be proved as in Ref. \cite{Wilde2014}. Let $\{\ket{x}\}_x$ be a basis of $\mathcal{H}$, then $M=\{M_x~|~M_x=\ket{x}\bra{x}\}$ is an informationally-complete measurement. Denoting $\Phi(\rho)=\sum_x\bra{x}\rho\ket{x}\ket{x}\bra{x}$, we have from the monotonicity of $A^{(\alpha)}(\rho,\sigma)$ and Jensen's inequality
\begin{align}
A^{(\alpha)}(\rho,\sigma)&\le A^{(\alpha)}(\Phi(\rho),\Phi(\sigma))\nonumber\\
&=\sum_x \left(\frac{\bra{x}\rho\ket{x}}{\bra{x}\sigma\ket{x}}\right)^{\alpha} \bra{x}\sigma\ket{x}\nonumber\\
&\le \left(\sum_x\bra{x}\rho\ket{x}\right)^{\alpha}=1.\nonumber
\end{align}

As the equality holds iff $\bra{x}\rho\ket{x}=\bra{x}\sigma\ket{x}$ for any informationally-complete measurement, one has $A^{(\alpha)}(\rho,\sigma)=1$ if and only if $\rho=\sigma$.
\end{proof}

\begin{widetext}
\section{$C^{(1/2)}_a(\rho_s)\le C^{(1/2)}_a(\rho^{\prime}_s)$}\label{app4}

\subsection{d=2 case}

For $d=2$, let $\Phi=\{K_{12},K_{11},K_{22}\}$ with
\begin{align}
K_{12}=\begin{pmatrix}
\frac{\sqrt{\rho_{12}}}{(\rho_{11}\rho_{22})^{1/4}}&0\\
0&\frac{\sqrt{\rho_{21}}}{(\rho_{11}\rho_{22})^{1/4}}\\
\end{pmatrix},\nonumber
~K_{11}=\begin{pmatrix}
\sqrt{1-\frac{|\rho_{12}|}{\sqrt{\rho_{11}\rho_{22}}}}&0\\
0&0\\
\end{pmatrix},\nonumber
~K_{22}=\begin{pmatrix}
0&0\\
0&\sqrt{1-\frac{|\rho_{12}|}{\sqrt{\rho_{11}\rho_{22}}}}\\
\end{pmatrix}.\nonumber
\end{align}

Since $\frac{|\rho_{12}|}{\sqrt{\rho_{11}\rho_{22}}}\le1$ we have $\left|\frac{\sqrt{\rho_{12}}}{(\rho_{11}\rho_{22})^{1/4}} \right|^2+\left|\sqrt{1-\frac{|\rho_{12}|}{\sqrt{\rho_{11}\rho_{22}}}} \right|^2=1$, and $\Phi$ is an incoherent operation such that $\Phi(\rho^{\prime}_s)=\rho_s$. Hence,
$C^{(1/2)}_a(\rho^{\prime}_s)\ge C^{(1/2)}_a(\rho_s)$.

\subsection{d=3 case}
For $d=3$, we denote $\sigma_{ij}=\frac{\rho_{ij}}{\sqrt{\rho_{ii}\rho_{jj}}}$ and $\rho_{ij}=|\rho_{ij}|e^{i\theta_{ij}}$. Without any loss of generality, we can assume that $|\sigma_{12}|\ge|\sigma_{13}|\ge|\sigma_{23}|$. Then the quantum operation $\Phi=\{K_{12},K_{13},K_{11},K_{22},K_{33}\}$ is 
\begin{align}
K_{12}=\begin{pmatrix}
\sqrt{\sigma_{12}}&0&0\\
0&\sqrt{\overline{\sigma_{12}}}&0\\
0&0&\frac{\overline{\sigma_{23}}}{\sqrt{\sigma_{12}}}
\end{pmatrix},\nonumber
~K_{13}=\begin{pmatrix}
\sqrt{\sigma_{13}-\sigma_{23}e^{i\theta_{12}}}&0&0\\
0&0&0\\
0&0&\sqrt{\sigma_{31}-\sigma_{32}e^{-i\theta_{12}}}
\end{pmatrix},\nonumber
\end{align}

\begin{align}
K_{11}=\begin{pmatrix}
\sqrt{1-|\sigma_{12}|-|\sigma_{13}-\sigma_{23}e^{i\theta_{12}}|}&0&0\\
0&0&0\\
0&0&0
\end{pmatrix},\nonumber
~K_{22}=\begin{pmatrix}
0&0&0\\
0&\sqrt{1-|\sigma_{12}|}&0\\
0&0&0
\end{pmatrix},\nonumber
~K_{33}=\begin{pmatrix}
0&0&0\\
0&0&0\\
0&0&\sqrt{1-\frac{|\sigma_{23}|^2}{|\sigma_{12}|}-|\sigma_{13}-\sigma_{23}e^{i\theta_{12}}|}
\end{pmatrix}.\nonumber
\end{align}

If $|\sigma_{12}|+|\sigma_{13}-\sigma_{23}e^{i\theta_{12}}|\le1$ and hence $|\sigma_{13}-\sigma_{23}e^{i\theta_{12}}|\le1$, then $\Phi$ is an incoherent operation. But, this may not be true for all states as these conditions may not be satisfied. Moreover, 
\begin{align}
K_{12}\rho^{\prime}_sK^{\dagger}_{12}=\begin{pmatrix}
|\sigma_{12}|\rho_{11}&\rho_{12}\langle{d_2}\ket{d_1}&\frac{\sqrt{\rho_{11}}\rho_{23}e^{i\theta_{12}}}{\sqrt{\rho_{22}}}\langle{d_3}\ket{d_1}\\
\rho_{21}\langle{d_1}\ket{d_2}&|\sigma_{12}|\rho_{22}&\rho_{23}\langle{d_3}\ket{d_2}\\
\frac{\sqrt{\rho_{11}}\rho_{32}e^{-i\theta_{12}}}{\sqrt{\rho_{22}}}\langle{d_1}\ket{d_3}&\rho_{32}\langle{d_2}\ket{d_3}&\frac{|\sigma_{23}|^2}{|\sigma_{12}|}\rho_{33}
\end{pmatrix}\nonumber
\end{align}

and
\begin{align}
K_{13}\rho^{\prime}_sK^{\dagger}_{13}=\begin{pmatrix}
|\sigma_{13}-\sigma_{23}e^{i\theta_{12}}|\rho_{11}&0&(\rho_{13}-\frac{\sqrt{\rho_{11}}\rho_{23}e^{i\theta_{12}}}{\sqrt{\rho_{22}}})\langle{d_3}\ket{d_1}\\
0&0&0\\
(\rho_{31}-\frac{\sqrt{\rho_{11}}\rho_{32}e^{-i\theta_{12}}}{\sqrt{\rho_{22}}})\langle{d_3}\ket{d_1}&0&|\sigma_{13}-\sigma_{23}e^{i\theta_{12}}|\rho_{33}
\end{pmatrix}.\nonumber
\end{align}

As a result, $\Phi(\rho^{\prime}_s)=\rho_s$ and $C^{(1/2)}_a(\rho_s)\le C^{(1/2)}_a(\rho^{\prime}_s)$.

\subsection{finite dimensional case}

If $\sum_{j\ne i}\frac{|\rho_{ij}|}{\sqrt{\rho_{ii}\rho_{jj}}}\le1$ (for each $i,j$), then the duality relation is true.
We denote the Kraus operators of quantum operation $\Phi \equiv \{K_{ij}\}$ ($1 \le i \le j \le d$) as follows:

\begin{align}
K_{ij}(i< j)=\begin{pmatrix}
0&...&0&...&0&...&0\\
.&...&.&...&.&...&.\\
.&...&.&...&.&...&.\\
.&...&.&...&.&...&.\\
0&...&\frac{\sqrt{\rho_{ij}}}{(\rho_{ii}\rho_{jj})^{1/4}}&...&0&...&0\\
.&...&.&...&.&...&.\\
.&...&.&...&.&...&.\\
.&...&.&...&.&...&.\\
0&...&0&...&\frac{\sqrt{\rho_{ji}}}{(\rho_{ii}\rho_{jj})^{1/4}}&...&0\\
.&...&.&...&.&...&.\\
.&...&.&...&.&...&.\\
.&...&.&...&.&...&.\\
0&...&0&...&0&...&0
\end{pmatrix}_{d \times d},\nonumber
~K_{ii}=\begin{pmatrix}
0&...&0&...&0&...&0\\
.&...&.&...&.&...&.\\
.&...&.&...&.&...&.\\
.&...&.&...&.&...&.\\
0&...&\sqrt{1-\sum_{j\ne i}\frac{|\rho_{ij}|}{\sqrt{\rho_{ii}\rho_{jj}}}}&...&0&...&0\\
.&...&.&...&.&...&.\\
.&...&.&...&.&...&.\\
.&...&.&...&.&...&.\\
0&...&0&...&0&...&0
\end{pmatrix}_{d \times d}.\nonumber
\end{align}

Then, it is not difficult to check that $\Phi$ is an incoherent operation and, for $\rho^{\prime}_s=\sum^d_{i,j=1}\sqrt{\rho_{ii}\rho_{jj}}\langle d_j\ket{d_i}\ket{i}\bra{j}$, we have $\Phi(\rho^{\prime}_s)=\sum^d_{i,j=1}\rho_{ij}\langle d_j\ket{d_i}\ket{i}\bra{j}=\rho_s$. Since $C^{(1/2)}_a$ is a coherence measure, we have
\begin{align}
C^{(1/2)}_a(\rho_s)=C^{(1/2)}_a[\Phi(\rho^{\prime}_s)]\le C^{(1/2)}_a(\rho^{\prime}_s).
\end{align}
%
\end{widetext}


\begin{thebibliography}{99}%

\bibitem{Baumgratz2014} T. Baumgratz, M. Cramer, and M. B. Plenio, {\it Quantifying coherence}, Phys. Rev. Lett. {\bf 113}, 140401 (2014).

\bibitem{Adesso2016BC} G. Adesso, T. R. Bromley, and M. Cianciaruso, {\it Measures and applications of quantum correlations}, J. Phys. A: Math. Theor. {\bf 49}, 473001 (2016).

\bibitem{Yao2015} Y. Yao, X. Xiao, L. Ge, and C. P. Sun, {\it Quantum coherence in multipartite systems}, Phys. Rev. A {\bf 92}, 022112 (2015).

\bibitem{Matera2016A} J. M. Matera, D. Egloff, N. Killoran, and M. B. Plenio, {\it Coherent control of quantum systems as a resource theory}, Quant. Sci. Tech. {\bf 1}, 01LT01 (2016).

\bibitem{Hillery2016A} M. Hillery, {\it Coherence as a resource in decision problems: The Deutsch-Jozsa algorithm and a variation}, Phys. Rev. A {\bf 93}, 012111 (2016).

\bibitem{Anand2017} N. Anand and A. K. Pati, {\it Coherence and entanglement monogamy in the discrete analogue of analog Grover search}, arXiv:1611.04542.

\bibitem{Shi2017} H.-L. Shi, S.-Y. Liu, X.-H. Wang, W.-L. Yang, Z.-Y. Yang, and H. Fan, {\it Coherence depletion in the Grover quantum search algorithm}, Phys. Rev. A {\bf 95}, 032307 (2017).

\bibitem{Girolami14} D. Girolami, {\it Observable measure of quantum coherence in finite dimensional systems}, Phys. Rev. Lett. {\bf 113}, 170401 (2014).

\bibitem{Streltsov2015B} A. Streltsov, U. Singh, H. S. Dhar, M. N. Bera, and G. Adesso, {\it Measuring quantum coherence with entanglement}, Phys. Rev. Lett. {\bf 115}, 020403 (2015).

\bibitem{Chitambar2016A} E. Chitambar and G. Gour, {\it Critical examination of incoherent operations and a physically consistent resource theory of quantum coherence}, Phys. Rev. Lett. {\bf 117}, 030401 (2016).

\bibitem{chitambar2016B} E. Chitambar and G. Gour, {\it Comparison of incoherent operations and measures of coherence}, Phys. Rev. A {\bf 94}, 052336 (2016).

\bibitem{streltsov2017A} A. Streltsov, G. Adesso, and M. B. Plenio, {\it Colloquium: Quantum coherence as a resource}, Rev. Mod. Phys. {\bf 89}, 041003 (2017).

\bibitem{Yuan2015} X. Yuan, H. Zhou, Z. Cao, and X. Ma, {\it Intrinsic randomness as a measure of quantum coherence}, Phys. Rev. A {\bf 92}, 022124 (2015).

\bibitem{winter2016} A. Winter and D. Yang, {\it Operational resource theory of coherence}, Phys. Rev. Lett. {\bf 116}, 120404 (2016).

\bibitem{Napoli2016} C. Napoli, T. R. Bromley, M. Cianciaruso, M. Piani, N. Johnston, and G. Adesso, {\it Robustness of coherence: An operational and observable measure of quantum coherence}, Phys. Rev. Lett. {\bf 116}, 150502 (2016).

\bibitem{Bu2017A} K. Bu, U. Singh, S.-M. Fei, A. K. Pati, and J. Wu, {\it Maximum relative entropy of coherence: An operational coherence measure}, Phys. Rev. Lett. {\bf 119}, 150405 (2017).

\bibitem{HorodeckiRMP09} R. Horodecki, P. Horodecki, M. Horodecki, and K. Horodecki, {\it Quantum entanglement}, Rev. Mod. Phys. {\bf 81}, 865 (2009).

\bibitem{Vedral1997} V. Vedral, M. B. Plenio, M. A. Rippin, and P. L. Knight, {\it Quantifying entanglement}, Phys. Rev. Lett. {\bf 78}, 2275 (1997).

\bibitem{Vedral1998} V. Vedral and M. B. Plenio, {\it Entanglement measures and purification procedures}, Phys. Rev. A {\bf 57}, 1619 (1998).

\bibitem{Plenio2000} M. B. Plenio, S. Virmani, and P. Papadopoulos, {\it Operator monotones, the reduction criterion and the relative entropy}, J. Phys. A: Math. Gen. {\bf 33}, L193 (2000).

\bibitem{X.yu2016B} X.-D. Yu, D.-J. Zhang, G. F. Xu, and D. M. Tong, {\it Alternative framework for quantifying coherence}, Phys. Rev. A {\bf 94}, 060302 (2016).

\bibitem{Helstrom1976} C. W. Helstrom, {\it Quantum detection and estimation theory} (New York: Academic Press, 1976).

\bibitem{Holevo2011} A. S. Holevo, {\it Probabilistic and statistical aspects of quantum theory} (Edizioni della Normale, 2011).

\bibitem{Holevo2001} A. S. Holevo, {\it Statistical structure of quantum theory} (Springer, Dordrecht, 2001).

\bibitem{HELSTROM1967A} C. W. Helstrom, {\it Detection theory and quantum mechanics}, Inf. Cont. {\bf 10}, 254 (1967).

\bibitem{HELSTROM1968} C. W. Helstrom, Inf. Cont. {\bf 13}, 156 (1968).

\bibitem{HOLEVO1973337} A. S. Holevo, {\it Statistical decision theory for quantum systems}, J. Multivar. Anal. {\bf 3}, 337 (1973).

\bibitem{Yuen1975} H. Yuen, R. Kennedy, and M. Lax, {\it Optimum testing of multiple hypotheses in quantum detection theory}, IEEE Trans. Inf. Theory {\bf 21}, 125 (1975). 

\bibitem{Davies1978} E. B. Davies, {\it Information and quantum measurement}, IEEE Trans. Inf. Theory IT-{\bf 24}, 596 (1978).

\bibitem{Phoenix1995} S. J. D. Phoenix and P. D. Townsend, {\it Quantum cryptography: How to beat the code breakers using quantum mechanics}, Contemp. Phys. {\bf 36}, 165 (1995).

\bibitem{HKLo2011} H.-K. Lo, S. Popescu, and T. P. Spiller, {\it Introduction to Quantum Computation and Information} ((Singapore: World-Scientific, 1998).

\bibitem{Bouwmeester2000} D. Bouwmeester, A. K. Ekert, and A. Zeilinger, {\it The physics of quantum information: quantum cryptography, quantum teleportation, quantum computation} (Springer-Verlag, Berlin, Heidelberg, 2000).

\bibitem{Gisin2002RMP} N. Gisin, G. Ribordy, W. Tittel, and H. Zbinden, {\it Quantum cryptography}, Rev. Mod. Phys. {\bf 74}, 145 (2002).

\bibitem{Loepp2006} W. K. W. Susan Loepp, {\it Protecting Information From Classical Error Correction to Quantum Cryptography} (Cambridge University Press, 2006).

\bibitem{Belavkin1975a} V. P. Belavkin, {\it Optimum distinction of nonorthogonal quantum signals}, Radiotekhnika i Elektronika {\bf 20}, 1177 (1975).

\bibitem{Belavkin1975} V. P. Belavkin, {\it Optimal quantum multiple hypothesis testing}, Stochastics {\bf 1}, 315 (1975). 

\bibitem{Holevo1978} A. S. Holevo, Teoriya Veroyatnostej i Ee Primeneniya, 23 (1978). 

\bibitem{Hausladen1994} P. Hausladen and W. K. Wootters, {\it A `pretty good' measurement for distinguishing quantum states}, J. Mod. Opt. {\bf 41}, 2385 (1994).

\bibitem{Hausladen1996} P. Hausladen, R. Jozsa, B. Schumacher, M. Westmoreland, and W. K. Wootters, {\it Classical information capacity of a quantum channel}, Phys. Rev. A {\bf 54}, 1869 (1996).

\bibitem{peres1991} A. Peres and W. K. Wootters, {\it Optimal detection of quantum information}, Phys. Rev. Lett. {\bf 66}, 1119 (1991). 

\bibitem{Eldar2001} Y. C. Eldar and G. D. Forney, {\it On quantum detection and the square-root measurement}, IEEE Trans. Inf. Theory {\bf 47}, 858 (2001).

\bibitem{Spehner2014} D. Spehner, {\it Quantum correlations and distinguishability of quantum states}, J. Math. Phys. {\bf 55}, 075211 (2014).

\bibitem{note} For a pure state ensemble $\{\ket{\psi_i},\eta_i\}$, the LSM is given by $\{\ket{\mu_i}\bra{\mu_i}\}$, where $\ket{\mu_i}=\sqrt{\eta_i}(\sum_j\eta_j\ket{\psi_i}\bra{\psi_i})^{-1/2}\ket{\psi_i}$ satisfies the measurement minimum $\sum_i||\ket{\mu_i}-\sqrt{\eta_i}\ket{\psi_i}||^2$ under the constraint $\sum_i\ket{\mu_i}\bra{\mu_i}=I$.

\bibitem{LeCam1986} L. M. Le Cam, {\it Asymptotic methods in statistical theory} (Springer-Verlag, Berlin, Heidelberg, 1986).

\bibitem{Bhattacharyya-measure} A. Bhattacharyya, {\it On a measure of divergence between two statistical populations defined by their probability distribution}, Bulletin of the Calcutta Mathematical Society {\bf 35}, 99 (1943).

\bibitem{Luo2004} S. Luo and Q. Zhang, {\it Informational distance on quantum-state space}, Phys. Rev. A {\bf 69}, 032106 (2004).

\bibitem{Nielsen10} M. A. Nielsen and I. L. Chuang, {\it Quantum computation and quantum information} (Cambridge University Press, 2010).

\bibitem{Audenaert2015} K. M. R. Audenaert and N. Datta, {\it $\alpha$-z-R{\'e}nyi relative entropies}, J. Math. Phys. {\bf 56}, 022202 (2015). 

\bibitem{LIEB1973} E. H. Lieb, {\it Convex trace functions and the Wigner-Yanase-Dyson conjecture}, Adv. Math. {\bf 11}, 267 (1973).

\bibitem{Audenaert2007} K. M. R. Audenaert, {\it A sharp continuity estimate for the von Neumann entropy}, J. Phys. A: Math. Theor. {\bf 40}, 8127 (2007).

\bibitem{Audenaert2007A} K. M. R. Audenaert, J. Calsamiglia, R. Mun\~{o}z Tapia, E. Bagan, L. Masanes, A. Acin, and F. Verstraete, {\it Discriminating states: The quantum Chernoff bound}, Phys. Rev. Lett. {\bf 98},160501 (2007).

\bibitem{spehner2013A} D. Spehner and M. Orszag, {\it Geometric quantum discord with Bures distance}, New J. Phys. {\bf 15}, 103001 (2013).

\bibitem{spehner2013B} D. Spehner and M. Orszag, {\it Geometric quantum discord with Bures distance: the qubit case}, J. Phys. A: Math. Theor. {\bf 47}, 035302 (2014).

\bibitem{Roga2016} W. Roga, D. Spehner, and F. Illuminati, {\it Geometric measures of quantum correlations}, J. Phys. A: Math. Theor. {\bf 49}, 235301 (2016).

\bibitem{trivial-coherence-measure} K. C. Tan, S. Choi, H. Kwon, and H. Jeong, {\it Coherence, quantum Fisher information, superradiance, and entanglement as interconvertible resources}, Phys. Rev. A {\bf 97}, 052304 (2018).


\bibitem{Spehner2017A} D. Spehner, F. Illuminati, M. Orszag, and W. Roga, {\it Geometric measures of quantum correlations with Bures and Hellinger distances}, in {\it Lectures on General Quantum Correlations and their Applications}, edited by F. F. Fanchini, D. d. O. Soares Pinto, and G. Adesso (Springer International Publishing, Cham, 2017) pp. 105–157.

\bibitem{Xiong2018A} C. Xiong and J. Wu, {\it Geometric coherence and quantum state discrimination}, arXiv:1801.06031v3 ({\it to be published in J. Phys. A: Math. Theor.}).

\bibitem{Barnum2002} H. Barnum and E. Knill, {\it Reversing quantum dynamics with near-optimal quantum and classical fidelity}, J. Math. Phys. {\bf 43}, 2097 (2002).

\bibitem{zhanghj2017} H.-J. Zhang, B. Chen, M. Li, S.-M. Fei, and G.-L. Long, {\it Estimation on geometric measure of quantum coherence}, Commun. Theor. Phys. {\bf 67}, 166 (2017).


\bibitem{greenberger-yasin} D. M. Greenberger and A. Yasin, {\it Simultaneous wave and particle knowledge in a neutron interferometer}, Phys. Lett. A {\bf 128}, 391 (1988).

\bibitem{BGEnglert} B-G. Englert, {\it Fringe visibility and which-way information: An inequality}, Phys. Rev. Lett. {\bf 77}, 2154 (1996).

\bibitem{Bera2015} M. N. Bera, T. Qureshi, M. A. Siddiqui, and A. K. Pati, {\it Duality of quantum coherence and path distinguishability}, Phys. Rev. A {\bf 92}, 012118 (2015).

\bibitem{Horn:2012:MA:2422911} R. A. Horn and C. R. Johnson, {\it Matrix analysis}, (Cambridge University Press, New York, USA, 2012).

\bibitem{Yucs2017} C.-s. Yu, {\it Quantum coherence via skew information and its polygamy}, Phys. Rev. A {\bf 95}, 042337 (2017).

\bibitem{Kwon2018} H. Kwon, C.-Y. Park, K. C. Tan, D. Ahn, and H. Jeong, {\it Coherence, asymmetry, and quantum macroscopicity}, Phys. Rev. A {\bf 97}, 012326 (2018).

\bibitem{Wilde2014} M. M. Wilde, A. Winter, and D. Yang, {\it Strong converse for the classical capacity of entanglement-breaking and Hadamard channels via a sandwiched R{\'e}nyi relative entropy}, Commun. Math. Phys. {\bf 331}, 593 (2014).

\end{thebibliography}

%

\end{document}